\documentclass{lmcs}
\pdfoutput=1

% LMCS Layouting Macros
\usepackage{lastpage}
\lmcsdoi{16}{4}{16}
\lmcsheading{}{\pageref{LastPage}}{}{}%
{Sep.~17,~2019}{Dec.~15,~2020}{}

\keywords{
  Process algebra,
Psi-calculi,
Nominal logic,
Interactive theorem proving,
Bisimulation}

%\fi
\usepackage{breakurl}

\usepackage[active]{srcltx}
\usepackage[utf8]{inputenc}
\usepackage{mathpartir}
\usepackage{latexsym}
\usepackage{amsmath}
\usepackage{amssymb}
\usepackage{amscd}
\usepackage{macros}
\usepackage{color}
\usepackage{graphics}
\usepackage{version}
\usepackage{soul}
\usepackage{placeins}
\usepackage{stmaryrd}
\usepackage[normalem]{ulem}

\usepackage{psi-macros}
\usepackage{microtype}

\theoremstyle{plain}

% Old def fails somehow
%\renewcommand{\qed}{\hfill\ensuremath{\Box}}
  %non-italic
%\newtheorem{appendixtheorem}{Theorem}  %non-italic
  %non-italic
  %non-italic
%\newtheorem{appendixlemma}[appendixtheorem]{Lemma}  %non-italic
%\newtheorem*{lemmaunnum}{Lemma}  %non-italic
%\newtheorem{corollary}[theorem]{Corollary}  %non-italic
%\newtheorem{appendixcorollary}[appendixtheorem]{Corollary}  %non-italic
%\newtheorem{example2}[definition]{Example}  %non-italic
%\newtheorem{conjecture}[theorem]{Conjecture}  %non-italic
%\newtheorem{appendixconjecture}[appendixtheorem]{Conjecture}  %non-italic

%\newtheorem{remark}{Remark}

% \hypersetup handled by lmcs.cls now

\newcommand{\provtrans}[4]{#1 \; \xrightarrow[#3]{#2} \; #4}
\newcommand{\framedprovtrans}[5]{#1 \frames \provtrans{#2}{#3}{#4}{#5}}

\begin{document}

\title[Psi-Calculi Revisited: Connectivity and Compositionality]
      {Psi-Calculi Revisited:\texorpdfstring{\\}{} Connectivity and Compositionality}
\author{Johannes {\AA}man Pohjola}
\address{Data61/CSIRO, Sydney, Australia
  \\ University of New South Wales, Sydney, Australia}
\email{johannes.amanpohjola@data61.csiro.au}

\begin{abstract}
Psi-calculi is a parametric framework for process calculi similar to popular pi-calculus extensions such as the explicit fusion calculus, the applied pi-calculus and the spi calculus.
Mechanised proofs of standard algebraic and congruence properties of bisimilarity apply to all calculi within the framework.

A limitation of psi-calculi is that communication channels must be symmetric and transitive.
In this paper, we give a new operational semantics to psi-calculi that allows us to lift these
restrictions and simplify some of the proofs.
The key technical innovation is to annotate transitions with a
\emph{provenance}---a description of the scope and channel they originate from.

We give mechanised proofs that our extension is conservative,
and that the standard algebraic and
congruence properties of strong and weak bisimilarity are maintained.
We show correspondence with a reduction semantics and barbed bisimulation.
We show how a pi-calculus with preorders that was previously beyond the scope of psi-calculi
can be captured,
and how to encode mixed choice under very strong quality criteria.
\end{abstract}

\maketitle

\section{Introduction}\label{sec:intro}

This paper is mainly concerned with \emph{channel connectivity}, by which we mean the relationship
that describes which input channels are connected to which output channels in a setting with message-passing concurrency.
In the {\pic}~\cite{DBLP:journals/iandc/MilnerPW92a},
channel connectivity is syntactic identity: in the process
\[\underline{a}(x).P \parop \overline{b}\,y.Q\]
\noindent where one parallel component is waiting to receive on channel $a$ and the other is waiting to send on channel $b$, interaction is possible only if $a=b$.

Variants of the {\pic} may have more interesting channel connnectivity.
The explicit fusion calculus pi-F~\cite{gardner.wischik:explicit-fusions-mfcs} extends
the {\pic} with a primitive for \emph{fusing} names;
once fused, they are treated as being for all purposes one and the same.
Channel connectivity is then given by the equivalence closure of the name fusions.
For example, if we extend the above example with the fusion $(a=b)$

\[\underline{a}(x).P \parop \overline{b}\,y.Q \parop (a=b)\]

\noindent then communication is possible.
Other examples may be found in e.g.~calculi for wireless communication~\cite{DBLP:journals/tcs/NanzH06}, where channel connectivity can be used to directly model the network's topology.

Psi-calculi~\cite{bengtson.johansson.ea:psi-calculi-long} is a family of applied process calculi,
where standard meta-theoretical results,
such as the algebraic laws and congruence properties of bisimulation,
have been established once and for all through mechanised proofs~\cite{DBLP:journals/jar/BengtsonPW16} for all members of the family.
Psi-calculi generalises e.g.~the {\pic} and the explicit fusion calculus in several ways.
In place of atomic names it allows
channels and messages to be taken from an (almost) freely chosen term language.
In place of fusions, it admits the formulas of an (almost) freely chosen logic as first-class
processes. Channel connectivity is determined by judgements of said logic, with one restriction:
the connectivity thus induced must be symmetric and transitive.

%This restriction is unfortunate. In life, connectivity is neither transitive nor symmetric in general. The postal system is asymmetric: I can send mail even if I have no postal address through which
%I can receive mail myself. Smoke signals are intransitive: if I can see your smoke signals, it does
%not follow that I can see the smoke signals you can see.
%Similar phenomena occur in computing: substitute x for the postal system,
%and wireless communication for smoke signals

The main contribution of the present paper is a new way to define the semantics of psi-calculi that
lets us lift this restriction, without sacrificing any of the algebraic laws
and compositionality properties.
It is worth noting that this was previously believed to be impossible:
Bengtson et al.~\cite[p.~14]{bengtson.johansson.ea:psi-calculi-long} even offer counterexamples
to the effect that without symmetry and transitivity, scope extension is unsound.
However, a close reading reveals that these counterexamples apply only to their particular choice of
labelled semantics, and do not rule out the possibility that the counterexamples
could be invalidated by a rephrasing of the labelled semantics such as ours.

The price we pay for this increased generality is more complicated transition labels:
we decorate input and output labels with a \emph{provenance} that keeps track of which prefix a
transition originates from. The idea is that if I am an input label and you are an output label,
we can communicate if my subject is your provenance, and vice versa.
This is offset by other simplifications of the semantics and associated
proofs that provenances enable.

\paragraph{Contributions} This paper makes the following specific technical contributions:

\begin{itemize}
  \item We define a new semantics of psi-calculi that lifts the requirement that
    channel connectivity must be symmetric and transitive,
    using the novel technical device of provenances.
    (Section~\ref{sec:definitions})
  \item We prove that strong and weak bisimulation is a congruence and satisfies the usual
    algebraic laws such as scope extension.
    Interestingly, provenances can be ignored for the purpose of bisimulation.
    These proofs are machine-checked in Nominal Isabelle~\cite{U07:NominalTechniquesInIsabelleHOL}.
    (Section~\ref{sec:bisimulation})
  \item We prove, again using Nominal Isabelle, that this paper's developments
    constitute a conservative extension of the original psi-calculi.
    (Section~\ref{sec:validation})
  \item To further validate our semantics, we define
    a reduction semantics and strong barbed congruence,
    and show that they agree with their labelled counterparts.
    (Section~\ref{sec:validation})    
  \item We capture a pi-calculus with preorders
    by Hirschkoff et al.~\cite{DBLP:conf/lics/HirschkoffMS13},
    that was previously beyond the scope of psi-calculi because of its
    non-transitive channel connectivity.
    The bisimilarity we obtain turns out to coincide with that of Hirschkoff et al.
    (Section~\ref{sec:prepi})
  \item 
    We exploit non-transitive connectivity to show that
    mixed choice is a derived operator of psi-calculi in a very strong sense:
    its encoding is fully abstract and satisfies strong operational correspondence.
    (Section~\ref{sec:choice})
\end{itemize}

\noindent
This paper is an extended version of \cite{DBLP:conf/forte/Pohjola19}.
In this version, we have extended many of the meta-theoretical results
and the associated Isabelle formalisation from strong to weak bisimulation
(Section~\ref{sec:weakbisimulation}).
We have also added a discussion of the aforementioned
counterexamples by Bengtson et al.~\cite{bengtson.johansson.ea:psi-calculi-long}
(Section~\ref{sec:counterexamples}), and a more thorough motivation of
the design decisions we have made when introducing provenances (Section~\ref{sec:design}).
Moreover, the other sections have been edited for detail and clarity.
We have opted against including the full proofs;
the interested reader is referred to the associated technical report~\cite{pohjola:newpsireport}
and Isabelle formalisation.
Isabelle proofs are available online.%
\footnote{\url{https://github.com/IlmariReissumies/newpsi}}

\section{Definitions}\label{sec:definitions}

This section introduces core definitions such as syntax and semantics.
Many definitions are shared with the original presentation of psi-calculi, so this section
also functions as a recapitulation of \cite{bengtson.johansson.ea:psi-calculi-long}. We will highlight
the places where the two differ.
For readers who desire a gentler introduction to psi-calculi than the present
paper offers, we recommend \cite{Johansson10}.

%Psi-calculi is built on the theory of nominal sets~\cite{TODO-pitts}, which allows us to reason
%formally up-to alpha-equivalence without committing to any particular syntax of the term
%language. This brief recapitulation will not do justice to the rich literature on nominal
%methods, but should give the unaquainted reader the necessary intuition and notation to follow
%the paper.

Psi-calculi is built on the theory of nominal sets~\cite{Gabbay01anew}, which allows us to reason
formally up to alpha-equivalence without committing to any particular syntax of the term
language.
We assume a countable set of \emph{names} $\nameset$ ranged over by $a,b,c,\dots,x,y,z$. 
A \emph{nominal set} is a set equipped with a permutation action $\cdot$;
intuitively, if $X \in \mathbf{X}$ and $\mathbf{X}$ is a nominal set, then $(x\;y)\cdot X$, which denotes
$X$ with all occurrences of the name $x$ swapped for $y$ and vice versa, is also an element of
$\mathbf{X}$.
%We use $p,q$ to range over sequences of permutations. %(TODO: will we need this notation?).
$\supp{X}$ (the \emph{support} of $X$) is, intuitively,
the set of names such that swapping them changes $X$.
We write $a \freshin X$ (``$a$ is fresh in $X$) for $a \notin \supp{X}$.
We overload $\freshin$ to sequences of names: $\vec{a} \freshin X$ means that for
each $a_i$ in $\vec{a}$, $a_i \freshin X$.
Similarly, $a \freshin X,Y$ abbreviates $a \freshin X \wedge a \freshin Y$.
A nominal set $\mathbf{X}$ has \emph{finite support} if for every $X\in\mathbf{X}$,
$\supp{X}$ is finite.
A function symbol $f$ is \emph{equivariant} if $p\cdot f(x) = f(p\cdot x)$,
where $p$ is a sequence of permutations $(x\;y)$;
this generalises to
$n$-ary function symbols in the obvious way.
Whenever we define inductive syntax with names, it is implicitly quotiented by permutation of
bound names, so e.g.~$(\nu x)\opa{a}{x} = (\nu y)\opa{a}{y}$ if $x,y \freshin a$.

Psi-calculi is parameterised on an arbitrary term language and a logic of environmental assertions:

\begin{defi}[Parameters]
A \emph{psi-calculus} is a 7-tuple $(\terms,\assertions,\conditions,\vdash,\otimes,\unit,\chcon)$ with
three finitely supported nominal sets:

\begin{enumerate}
  \item $\terms$, the \emph{terms}, ranged over by $M,N,K,L,T$;
  \item $\assertions$, the \emph{assertions}, ranged over by $\Psi$; and
  \item $\conditions$, the \emph{conditions}, ranged over by $\varphi$.
\end{enumerate}

\noindent We assume each of the above is equipped with a substitution function
\subst{\_}{\_} that substitutes (sequences of) terms for names.
The remaining three parameters are equivariant function symbols written in infix:
\begin{center}
\begin{tabular}{ll}
   $\vdash\; : \assertions \times \conditions \Rightarrow \kwd{bool}$ & (entailment) \\
   $\otimes : \assertions \times \assertions \Rightarrow \assertions$ & (composition) \\
   $\unit : \assertions$ & (unit) \\
   $\chcon\;: \terms \times \terms \Rightarrow \conditions$ & (channel connectivity)
\end{tabular}
\end{center}
\end{defi}

\noindent Intuitively, $M \chcon K$ means the prefix $M$ can send a message to the prefix $K$.
The substitution functions must satisfy certain natural criteria wrt.~their treatment of names;
see \cite{bengtson.johansson.ea:psi-calculi-long} for the details.
We use $\sigma$ to range over substitutions $\subst{\ve{x}}{\ve{T}}$. A substitution
$\subst{\ve{x}}{\ve{T}}$ is \emph{well-formed} if $|\ve{x}| = |\ve{T}|$ and $\ve{x}$ are
pairwise distinct. Unless otherwise specified, we only consider well-formed substitutions.

\begin{defi}[Static equivalence]
  Two assertions $\Psi,\Psi'$ are \emph{statically equivalent}, written
  $\Psi \simeq \Psi'$, if
  $\forall \varphi.\; \Psi \vdash \varphi \; \Leftrightarrow \; \Psi' \vdash \varphi$.
\end{defi}

\begin{defi}[Valid parameters]
A psi-calculus is \emph{valid} if $(\assertions/{\simeq},\otimes,\unit)$ form an abelian monoid.
\end{defi}
%the following hold for all $\Psi,\Psi',\Psi''$:
%\begin{center}
%\begin{tabular}{llcll}
%    $\Psi \otimes \Psi' \simeq \Psi' \otimes \Psi$ & (commutativity) &
%  & $\Psi \otimes (\Psi' \otimes \Psi'') \simeq (\Psi \otimes \Psi') \otimes \Psi''$ & (associativity)
% \\ $\Psi \otimes \unit \simeq \Psi$ & (identity) & $\quad$
%  & $\Psi \simeq \Psi'$ implies $\Psi \otimes \Psi'' \simeq \Psi' \otimes \Psi''$ & (compositionality)
%\end{tabular}
%\end{center}
%\end{defi}

Note that since the abelian monoid is closed, static equivalence is preserved by composition.
Henceforth we will only consider valid psi-calculi.
The original presentation of psi-calculi had $\sch$ for channel equivalence in place of
our $\chcon$, and required that channel equivalence be symmetric
(formally, $\Psi \vdash M \sch K$ iff $\Psi \vdash K \sch M$)
%for all $\Psi,M,K$)
and transitive. %The present paper drops this requirement.

\begin{defi}[Process syntax]
  The \emph{processes} (or \emph{agents}) $\processes$, ranged over by $P,Q,R$, are inductively defined by the grammar

  \begin{center}
  \begin{tabular}{rcll}
    $P$ & :=     & $\nil$ & (nil) \\
      &  & $\pass{\Psi}$ & (assertion) \\
      &  & $\outprefix{M}{N}.P$ & (output) \\
      &  & $\inprefix{M}{\vec{x}}{N}.P$ & (input) \\
      &  & ${\caseonly\;{\ci{\ve{\varphi}}{\ve{P}}}}$ & (case) \\
      &  & $P \parop Q$ & (parallel composition) \\
      &  & $(\nu x)P$ & (restriction) \\
      &  & $!P$ & (replication)
  \end{tabular}
  \end{center}

\noindent A process is \emph{assertion-guarded}
if all assertions occur underneath an input or output prefix.
We require that in $!P$, $P$ is guarded;
that in ${\caseonly\;{\ci{\ve{\varphi}}{\ve{P}}}}$, all $\ve{P}$ are guarded;
and that in $\inprefix{M}{\vec{x}}{N}\sdot P$ it holds that $\vec{x} \subseteq \supp{N}$.
We will use $P_G,Q_G$ to range over assertion-guarded processes.
A process $P$ is \emph{prefix-guarded} if its outermost operator is an input
or output prefix.
\end{defi}

Restriction, replication and parallel composition are standard.
We lift restriction to sequences of names by letting $(\nu\ve{a})P$
abbreviate $(\nu a_0)(\nu a_1)\dots(\nu a_i)P$; in
particular, $(\nu \epsilon)P = P$.
$\outprefix{M}{N}.P$ is a process ready to send
the message $N$ on channel $M$, and then continue as $P$.
Similarly, $\inprefix{M}{\vec{x}}{N}.P$ is a process ready to receive
a message on channel $M$ that matches the pattern $(\lambda\vec{x})N$.
We sometimes write $\outprefix{M}{N}$ to stand for $\outprefix{M}{N}.0$,
and similarly for input. We elide the object $N$ when it is unimportant.

The process $\pass{\Psi}$ asserts a fact $\Psi$ about the environment.
Intuitively, $\pass{\Psi} \parop P$ means that $P$ executes in an environment
where all conditions entailed by $\Psi$ hold. $P$ may itself
contain assertions that add or retract conditions.
Environments can evolve dynamically: as a process reduces,
assertions may become unguarded and thus added to the
environment.
${\caseonly\;{\ci{\ve{\varphi}}{\ve{P}}}}$ is a process that may act as
any $P_i$ whose guard $\varphi_i$ is entailed by the environment.
%If all guards are always entailed, it behaves exactly as the $n$-ary choice
%operator of e.g.~CCS.
For a discussion of why replication and case must be assertion-guarded we refer to
\cite{bengtson.johansson.ea:psi-calculi-long,DBLP:conf/lics/JohanssonBPV10}.
We use $\casesep$ to denote composition of $\caseonly$ statements,
so e.g.~%
$\caseonly\;{\ci{\varphi}{P}}\casesep{\ci{\varphi'}{Q}}$
desugars to ${\caseonly\;\ci{\varphi,\varphi'}{P,Q}}$.

The assertion environment of a process is described by its \emph{frame}:

\begin{defi}[Frames]\label{def:frame}
  The \emph{frame} of $P$, written $\frameof{P} = \framepair{\frnames{P}}{\frass{P}}$
  where $\frnames{P}$ bind into $\frass{P}$, is defined as
  \[
  \begin{array}{rlll}
    \frameof{\pass{\Psi}} & = & \framepair{\epsilon}{\Psi} &
    \\
    \frameof{P \parop Q} & = & \frameof{P} \otimes \frameof{Q} &
    \\
     \frameof{(\nu x)P} & = & (\nu x)\frameof{P} &
     \\
     \frameof{P} & = & \unit & \mbox{otherwise}
  \end{array}
  \]
%\begin{mathpar}
%  \frameof{\pass{\Psi}} = \framepair{\epsilon}{\Psi}
%  \and
%  \frameof{P \parop Q} = \frameof{P} \otimes \frameof{Q}
%  \and
%   \frameof{(\nu x)P} = (\nu x)\frameof{P}
%   \and
%   \frameof{P} = \unit \mbox{ otherwise}
%\end{mathpar}  
%  
  \noindent where name-binding and composition of frames is defined as
  $(\nu x)\framepair{\frnames{P}}{\frass{P}} = \framepair{x,\frnames{P}}{\frass{P}}$,
  and, if $\frnames{P} \freshin \frnames{Q},\frass{Q}$ and $\frnames{Q} \freshin \frass{P}$,
  \[\framepair{\frnames{P}}{\frass{P}} \otimes \framepair{\frnames{Q}}{\frass{Q}} = \framepair{\frnames{P},\frnames{Q}}{(\frass{P} \otimes \frass{Q})}\]

\noindent where $\nu$ binds stronger than $\otimes$.
We overload $\Psi$ to denote the frame $(\nu \epsilon)\Psi$.
\end{defi}

We extend entailment to frames as follows: $\frameof{P} \vdash \varphi$ holds if,
for some $\frnames{P},\frass{P}$ such that $\frameof{P} = \framepair{\frnames{P}}{\frass{P}}$
and $\frnames{P} \freshin \varphi$,
$\frass{P} \vdash \varphi$.
The freshness side-condition $\frnames{P} \freshin \varphi$ is important because it allows
assertions to be used for representing local state. By default, the assertion environment is
effectively a form of global non-monotonic state,
which is not always appropriate for modelling distributed processes.
With $\nu$-binding we can recover locality by writing e.g.~$(\nu x)(\pass{x=M} \parop P)$
for a process $P$ with a local variable $x$.

The notion of \emph{provenance} is the main novelty of our semantics.
It is the key technical device used to make our semantics compositional:

\begin{defi}[Provenances]
The \emph{provenances} $\provenances$, ranged over by $\pi$, are either $\bot$ or of form $(\nu\ve{x};\ve{y})M$, where $M$ is a term, and $\ve{x},\ve{y}$ bind into $M$.
\end{defi}

We write $M$ for $(\nu\epsilon;\epsilon)M$. When $\ve{x},\ve{y} \freshin \ve{x'},\ve{y'}$ and $\ve{x} \freshin \ve{y}$, we interpret the expression $(\nu\ve{x};\ve{y})(\nu\ve{x'};\ve{y'})M$ as $(\nu\ve{x}\,\ve{x'};\ve{y}\,\ve{y'})M$. Furthermore, we equate $(\nu\ve{x};\ve{y})\bot$ and $\bot$.
Let $\pi\downarrow$ denote the result of moving all binders from the outermost binding sequence to the innermost; that is,
$(\nu\ve{x};\ve{y})M{\downarrow} = (\nu\epsilon;\ve{x},\ve{y})M$.
Similarly, $\pi\downarrow\ve{z}$ denotes the result of inserting $\ve{z}$ at the end of the outermost binding sequence:
formally, $(\nu\ve{x};\ve{y})M\downarrow\ve{z} = (\nu\ve{x},\ve{z};\ve{y})M$.

Intuitively, a provenance describes the origin of an input or output transition.
For example, if an output transition is annotated with $(\nu\ve{x};\ve{y})M$, the sender
is an output prefix with subject $M$ that occurs underneath the $\nu$-binders $\ve{x},\ve{y}$.
For technical reasons, these binders are partitioned into two distinct sequences.
The intention is that $\ve{x}$ are the frame binders,
while $\ve{y}$ contains binders that occur underneath case and replication;
these are not part of the frame, but may nonetheless bind into $M$.
We prefer to keep them separate because the $\ve{x}$ binders
are used for deriving $\vdash$ judgements,
but $\ve{y}$ are not (cf.~Definition~\ref{def:frame}).

\begin{defi}[Labels]
The \emph{labels} $\labels$, ranged over by $\alpha,\beta$, are:
\begin{center}
\begin{tabular}{ll}
    $\bout{\ve{x}}{M}{N}$ & (output) \\
    $\inlabel{M}{N}$ & (input) \\
    $\tau$ & (silent)
\end{tabular}
\end{center}
%  \begin{tabular}{ll}
%    $\bout{\ve{x}}{M}{N}$ & (output) \\
%    $\inlabel{M}{N}$ & (input) \\
%    $\tau$ & (silent)
%  \end{tabular}
\noindent The bound names of $\alpha$, written $\bn{\alpha}$, is $\ve{x}$ if $\alpha = \bout{\ve{x}}{M}{N}$ and $\epsilon$ otherwise.
The subject of $\alpha$, written $\subj{\alpha}$, is $M$ if $\alpha = \bout{\ve{x}}{M}{N}$ or $\alpha = \inlabel{M}{N}$.
Analogously, the object of $\alpha$, written $\obj{\alpha}$,
is $N$ if $\alpha = \bout{\ve{x}}{M}{N}$ or $\alpha = \inlabel{M}{N}$.
\end{defi}

While the provenance describes the origin of a transition, a label describes how it can
interact.
For example, a transition labelled with $\inlabel{M}{N}$ indicates readiness to receive a
message $N$ from an output prefix with subject $M$.

\begin{defi}[Operational semantics]
The transition relation $\mathord{\apitransarrow{}}\subseteq \assertions \times \processes \times \labels \times \provenances \times \processes$ is inductively defined by the rules in Table~\ref{table:full-struct-free-labeled-operational-semantics}.
We write $\framedprovtrans{\Psi}{P}{\alpha}{\pi}{P'}$ for $(\Psi,P,\alpha,\pi,P')\in\mathord{\apitransarrow{}}$. In transitions, $\bn{\alpha}$ binds into $\obj{\alpha}$ and $P'$.
\end{defi}

\begin{table*}[tb]

\begin{mathpar}

\inferrule*[Left=\textsc{In}]
    {\Psi \vdash K \chcon M}   
{\framedprovtrans{\Psi}{\inprefix{M}{\ve{y}}{N} \sdot
P}{\inlabel{K}{N}\subst{\ve{y}}{\ve{L}}}{M}{P\subst{\ve{y}}{\ve{L}}}}

%\inferrule*[left=\textsc{BrOut}]
%    {\Psi \vdash M \conout K \quad\Psi\vdash\Prio K p }
%    {\framedtrans{\Psi}{\outprefix{M}{N} \sdot P}{\broutlabel{K}{N}}{P}}

\inferrule*[left=\textsc{Out}]
    {\Psi \vdash M \chcon K}
    {\framedprovtrans{\Psi}{\outprefix{M}{N} \sdot P}{\outlabel{K}{N}}{M}{P}}

%\inferrule*[Left=\textsc{BrIn}]
%    {\Psi \vdash K\conin M }   
%{\framedtrans{\Psi}{\inprefix{M}{\ve{y}}{N}
%\sdot P}{\brinlabel{K}{N}\subst{\ve{y}}{\ve{L}}} {P\subst{\ve{y}}{\ve{L}}}}

\inferrule*[left=\textsc{ParL},  right={$\inferrule{}{\bn{\alpha} \freshin Q}
$}]
{\framedprovtrans{\frass{Q} \ftimes \Psi}{P} {\alpha}{\pi}{P'}}
{\framedprovtrans{\Psi}{P \pll Q}{\alpha}{\pi\downarrow\frnames{Q}}{P' \pll Q}}

\inferrule*[left=\textsc{ParR},  right={$\inferrule{}{\bn{\alpha} \freshin P}
$}]
{\framedprovtrans{\frass{P} \ftimes \Psi}{Q} {\alpha}{\pi}{Q'}}
{\framedprovtrans{\Psi}{P \pll Q}{\alpha}{(\nu\frnames{P})\pi}{P \pll Q'}}

\inferrule*[left=\textsc{Com}, right={$\ve{a} \freshin Q$}]
 {%\Psi \ftimes \frass{P} \ftimes \frass{Q} \vdash K \chcon M \\
 \framedprovtrans{\frass{Q} \ftimes \Psi}{P}{\outlabel{M}{(\nu \ve{a})N}}{(\nu\frnames{P};\ve{x})K}{P'} \\
  \framedprovtrans{\frass{P} \ftimes \Psi}{Q}{\inlabel{K}{N}}{(\nu\frnames{Q};\ve{y})M}{Q'} 
  }
       {\framedprovtrans{\Psi}{P \pll Q}{\tau}{\bot}{(\nu \ve{a})(P' \pll Q')}}

\inferrule*[left={\textsc{Case}}]
    {\framedprovtrans{\Psi}{P_i}{\alpha}{\pi}{P'} \\ \Psi \vdash \varphi_i}
    {\framedprovtrans{\Psi}{\case{\ci{\ve{\varphi}}{\ve{P}}}}{\alpha}{\pi\downarrow}{P'}}

\inferrule*[left=\textsc{Scope}, right={$b \freshin \alpha,\Psi$}]
    {\framedprovtrans{\Psi}{P}{\alpha}{\pi}{P'}}
    {\framedprovtrans{\Psi}{(\nu b)P}{\alpha}{(\nu b)\pi}{(\nu b)P'}}
    
\inferrule*[left=\textsc{Open}, right={$\inferrule{}{b \freshin \ve{a},\Psi,M\\\\b \in \names{N}}$}]
    {\framedprovtrans{\Psi}{P}{\outlabel{M}{(\nu \ve{a})N}}{\pi}{P'}}
    {\framedprovtrans{\Psi}{(\nu b)P}{\outlabel{M}{(\nu \ve{a} \cup \{b\})N}}{(\nu b)\pi}{P'}}

\inferrule*[left=\textsc{Rep}]
   {\framedprovtrans{\Psi}{P \pll !P}{\alpha}{\pi}{P'}}
   {\framedprovtrans{\Psi}{!P}{\alpha}{\pi\downarrow}{P'}}

\end{mathpar}
\caption{\rm Structured operational semantics. A symmetric version of \textsc{Com} is elided. In the rule $\textsc{Com}$ we assume that
  $\frameof{P} = \framepair{\frnames{P}}{\Psi_P}$ and
  $\frameof{Q} = \framepair{\frnames{Q}}{\Psi_Q}$ where
  $\frnames{P}$ is fresh for $\Psi$ and $Q$,
  $\ve{x}$ is fresh for $\Psi, \frass{Q}, P$, and
  $\frnames{Q},\ve{y}$ are similarly fresh.
In rule
\textsc{ParL} we assume that $\frameof{Q} = \framepair{\frnames{Q}}{\Psi_Q}$
where $\frnames{Q}$ is fresh for
$\Psi, P, \pi$ and $\alpha$.
\textsc{ParR} has the same freshness conditions but with the roles of $P,Q$ swapped.
In $\textsc{Open}$ the expression $\tilde{a} \cup \{b\}$ means the sequence
$\tilde{a}$ with $b$ inserted anywhere.
}
\label{table:full-struct-free-labeled-operational-semantics}
\end{table*}

Note that to avoid clutter, the freshness conditions for some of the rules are stated in the caption of Table~\ref{table:full-struct-free-labeled-operational-semantics}.

The operational semantics differs from \cite{bengtson.johansson.ea:psi-calculi-long}
mainly by the inclusion of provenances: anything underneath the transition arrows is novel.

The \textsc{Out} rule states that in an environment where $M$ is connected to $K$,
the prefix $\outprefix{M}{N}$ may send a message $N$ from $M$ to $K$.
The \textsc{In} rule is dual to \textsc{Out}, but also features pattern-matching.
If the message is an instance of the pattern, as witnessed by a substitution,
that substitution is applied to the continuation $P$.
%If no such substitution exists, pattern-matching fails and the message is not received.

In the \textsc{Com} rule, we see how provenances are used to determine when two
processes can interact.
Specifically, a communication between $P$ and $Q$ can be derived if $P$ can send a message to $M$ using prefix $K$,
and if $Q$ can receive a message from $K$ using prefix $M$.
Because names occurring in $M$ and $K$ may be local to $P$ and $Q$ respectively,
we must be careful not to conflate the local names of one with the other;
this is why the provenance records all binding names that occur above $M,K$ in the process syntax.
%For example, the provenance $(\nu\frnames{P};\ve{x})K$ on the transition from $P$ means that in $P$,
%$K$ occurs underneath the binders $\frnames{P}\ve{x}$. Of these, the outermost binding sequence is the
%frame binders, while the innermost sequence are those that are not part of $\frameof{P}$ because they
%occur underneath a replication or case statement.
Note that even though we identify frames and provenances up to alpha,
the \textsc{Com} rule uses syntactically identical binding sequences $\ve{b}_P,\ve{b}_Q$ in two roles:
as frame binders $\frameof{P} = \framepair{\frnames{P}}{\frass{P}}$,
and as the outermost provenance binders.
By thus insisting that these binding sequences are chosen to coincide,
we ensure that the $K$ on $Q$'s label
%, which may depend on the top-level assertions $\frass{P}$ of $P$
%that are part of $Q$'s environment,
really is the same as the $K$ in $P$'s provenance.

It is instructive to compare our \textsc{Com} rule with the original:

\begin{mathpar}
\inferrule*[left=\textsc{Com-Old}, right={$\inferrule{}{\ve{a} \freshin Q }$}]
 {\framedtrans{\frass{Q} \ftimes \Psi}{P}{\bout{\ve{a}}{M}{N}}{P'} \\
  \framedtrans{\frass{P} \ftimes \Psi}{Q}{\inlabel{K}{N}}{Q'} \\
  \Psi \ftimes \frass{P} \ftimes \frass{Q} \vdash M \sch K
  }
       {\framedtrans{\Psi}{P \pll Q}{\tau}{(\nu \ve{a})(P' \pll Q')}}
\end{mathpar}

\noindent where $\frameof{P} =\framepair{\frnames{P}}{\frass{P}}$ and
$\frameof{Q} = \framepair{\frnames{Q}}{\frass{Q}}$ and $\frnames{P} \freshin \Psi, \frnames{Q}, Q, M, P$ and $\frnames{Q} \freshin \Psi, \frnames{Q}, Q, K, P$.
Here we have no way of knowing if $M$ and $K$ are able
to synchronise other than making a channel equivalence judgement.
Hence any derivation involving \textsc{Com-Old} makes three channel equivalence judgements:
once each in \textsc{In}, \textsc{Out} and \textsc{Com-Old}. With \textsc{Com} we only make one ---
or more accurately, we make the exact same judgement twice, in \textsc{In} resp.~\textsc{Out}.
Eliminating the redundant judgements is crucial: the reason \textsc{Com-Old} needs associativity
and commutativity is to stitch these three judgements together, particularly
when one end of a communication is swapped for a bisimilar process
that allows the same interaction via different prefixes.

%this third channel equivalence judgement is the reason the old semantics requires
%associativity and transitivity.
Note also that \textsc{Com} has fewer freshness side-conditions.
A particularly unintuitive aspect of \textsc{Com-Old} is that it
requires $\frnames{P} \freshin M$ and $\frnames{Q} \freshin K$, but not
 $\frnames{P} \freshin K$ and $\frnames{Q} \freshin M$: we would expect that all bound
names can be chosen to be distinct from all free names, but adding the missing
freshness conditions makes scope extension unsound~\cite[pp.~56-57]{Johansson10}.
With \textsc{Com}, it becomes clear why:
because $\frnames{Q}$ binds into $M$.

All the other rules can fire independently of what the provenance of the premise is.
They manipulate the provenance, but only for bookkeeping purposes:
in order for the \textsc{Com} rule to be sound,
we maintain the invariant that if $\framedprovtrans{\Psi}{P}{\alpha}{\pi}{P'}$,
the outer binders of $\pi$ are precisely the binders of $\frameof{P}$.
Otherwise, the rules are exactly the same as in the original psi-calculi.

The reader may notice a curious asymmetry between the treatment of provenance binders
in the \textsc{ParL} and \textsc{ParR} rules.
This is to ensure that the order of the provenance binders coincides with the order of the frame
binders, and in the frame $\frameof{P \parop Q}$, the binders of $P$ occur syntactically
outside the binders of $Q$ (cf.~Definition~\ref{def:frame}).

\begin{exa}\label{exa:ether}
To illustrate how subjects and provenances interact,
we consider a psi-calculus where terms are names, assertions are
(finite) sets of names, composition is union, and channel connectivity
is defined as follows:

\[
\Psi \vdash x \chcon y \quad \mbox{iff} \quad x,y \in \Psi
\]

The intuition here is that there exists a single, shared communication
medium through which all communication happens. Processes are
allowed to declare aliases for this shared medium by
adding them to the assertion environment.

Consider the following processes, where $\pass{x}$ abbreviates $\pass{\{x\}}$
and $x \neq y$:

\begin{mathpar}
 P = (\nu x)(\overline{x} \parop \pass{x}) \and
 Q = (\nu y)(\underline{y} \parop \pass{y})
\end{mathpar}

\noindent Here $P$ and $Q$ are sending and receiving, respectively, via locally scoped
aliases of the shared communication medium. This example has been used
previously~\cite{bengtson.johansson.ea:psi-calculi-long},
to illustrate why the original psi-calculi needs channel equivalence
in all three of the \textsc{In}, \textsc{Out} and \textsc{Com} rules.
Up to scope extension $P \parop Q$ is equivalent to
\[
(\nu x,y)(\overline{x} \parop \pass{x} \parop \underline{y} \parop \pass{y})
\]
in which a communication between $x$ and $y$ is clearly possible,
because $x$ and $y$ are connected in the environment $\{x,y\}$.
Hence a communication must also be possible in $P \parop Q$. But
the two processes share no free names that can be used as communication subjects;
$P$ cannot do an output action with subject $x$ because $x$ is bound, and
similarly, $Q$ cannot do an input with subject $y$.
The only available option is for each of $P$ and $Q$ to derive transitions
labelled with the other's prefix:

\begin{mathpar}
  \and % extra space to make 'Scope' not overlap on the left
  \inferrule*[Left=\textsc{Scope}]
   {\inferrule*[Left=\textsc{Par-L}]
     {\inferrule*[Left=\textsc{Out}]
        {\{x,y\} \vdash x \chcon y}
        {\framedprovtrans{\{x,y\}}{\overline{x}}{\overline{y}}{x}{0}}
     }
     {\framedprovtrans{\{y\}}{\overline{x} \parop \pass{x}}{\overline{y}}{x}{0 \parop \pass{x}}}
   }
   {\framedprovtrans{\{y\}}{P}{\overline{y}}{(\nu x)x}{(\nu x)(0 \parop \pass{x})}}

  \and

  \inferrule*[Left=\textsc{Scope}]
   {\inferrule*[Left=\textsc{Par-L}]
     {\inferrule*[Left=\textsc{In}]
        {\{x,y\} \vdash x \chcon y}
        {\framedprovtrans{\{x,y\}}{\underline{y}}{\underline{x}}{y}{0}}
     }
     {\framedprovtrans{\{x\}}{\underline{y} \parop \pass{y}}{\underline{x}}{y}{0 \parop \pass{y}}}
   }
   {\framedprovtrans{\{x\}}{Q}{\underline{x}}{(\nu y)y}{(\nu y)(0 \parop \pass{y})}}
\end{mathpar}

In the original psi-calculi---where the exact same input and output transitions can be derived, but without the provenance annotations---it is clear that without the extra channel equivalence check in the \textsc{Com-Old} rule, we could not derive a communication between $P$ and $Q$.

With our provenance semantics the \textsc{Com} rule applies immediately.
Note that we have $\frameof{P} = (\nu x)\{x\}$ and $\frameof{Q} = (\nu y)\{y\}$,
and that $P$'s transition matches $Q$'s provenance and vice versa:

\begin{mathpar}
  \inferrule*[Left=\textsc{Com}]
    {\framedprovtrans{\{y\}}{P}{\overline{y}}{(\nu x)x}{(\nu x)(0 \parop \pass{x})} \\
     \framedprovtrans{\{x\}}{Q}{\underline{x}}{(\nu y)y}{(\nu y)(0 \parop \pass{y})}
    }
    {\framedprovtrans{\{\}}{P \parop Q}{\tau}{\bot}{(\nu x)(0 \parop \pass{x}) \parop (\nu y)(0 \parop \pass{y})}}
\end{mathpar}

\end{exa}

\begin{exa}
  This example is intended to illustrate how and why we maintain the invariant that frame
  and provenance binders coincide, and why it matters that they coincide in the
  \textsc{Com} rule.
  To this end, we will consider the process $P \parop Q$,
  where $P$ and $Q$ are defined as follows
  \begin{mathpar}
    P = (\nu x)((\nu y)\pass{\frass{P}} \parop !(\nu z)\outprefix{x}{z})
  \and
    Q = (\nu a b)(\inprefix{a}{c}{c}.R \parop \inprefix{b}{c}{c}.S \parop \pass{\frass{Q}})
  \end{mathpar}
  and where the composition $\frass{P} \otimes \frass{Q}$ entails the connectivity judgements
  $x \chcon a$ and $y \chcon b$, but not $x \chcon b$ or $y \chcon a$.
  We also assume $x,y \freshin \frass{Q}$ and $a,b \freshin \frass{P}$.
  Concretely, this can be realised by e.g.~extending the setup from Example~\ref{exa:ether}
  to use a pair of sets instead of a single set, and letting connectivity be membership
  in the same set.
  
  Let us focus on how we can derive a communication between the subjects $x$ and $a$.
  
  We have that $\frameof{P} = (\nu xy)\frass{P}$ and $\frameof{Q} = (\nu ab)\frass{Q}$.
  In the environment $\frass{P}$, we have the following derivation of an input transition from $Q$:
  
  \begin{mathpar}
  \inferrule*[Left=\textsc{Scope} $\times 2$]{
    \inferrule*[Left=\textsc{Par-L}]
      {\inferrule*[Left=\textsc{In}]
        {\frass{P} \otimes \frass{Q} \vdash x \chcon a}
        {
          \framedprovtrans%
          {\frass{P} \otimes \frass{Q}}
          {\inprefix{a}{c}{c}.R}
          {\inlabel{x}{z}}
          {a}
          {R\subst{c}{z}}          
        }
      }
      {\framedprovtrans%
        {\frass{P}}%
        {\inprefix{a}{c}{c}.R \parop \inprefix{b}{c}{c}.S \parop \pass{\frass{Q}}}%
        {\inlabel{x}{z}}%
        {a}%
        {R\subst{c}{z} \parop \inprefix{b}{c}{c}.S \parop \pass{\frass{Q}}}%
      }
    }
    {\framedprovtrans%
        {\frass{P}}%
        {Q}%
        {\inlabel{x}{z}}%
        {(\nu a,b;\epsilon)a}%
        {(\nu ab)(R\subst{c}{z} \parop \inprefix{b}{c}{c}.S \parop \pass{\frass{Q}})}%
    }    
  \end{mathpar}

  The corresponding output transition in $Q$ is derived as follows:

  \begin{mathpar}
    \inferrule*[Left={\textsc{Scope}}]
      {\inferrule*[Left=\textsc{Par-R}]
         {
          \inferrule*[Left=\textsc{Rep}]
           {
            \inferrule*[Left=\textsc{Par-L}]
             {             
              \inferrule*[Left=\textsc{Open}]
               {
                \inferrule*[Left=\textsc{Out}]
                 {
                  \frass{P} \otimes \frass{Q} \vdash x \chcon a
                 }
                 {\framedprovtrans
                   {\frass{P} \otimes \frass{Q}}
                   {\outprefix{x}{z}}
                   {\outlabel{a}{z}}
                   {x}
                   {0}
                 }
               }
               {\framedprovtrans
                  {\frass{P} \otimes \frass{Q}}
                  {(\nu z)\outprefix{x}{z}}
                  {\outlabel{a}{(\nu z)z}}
                  {(\nu z;\epsilon)x}
                  {0}
               }
             }
             {\framedprovtrans
               {\frass{P} \otimes \frass{Q}}
               {(\nu z)\outprefix{x}{z} \parop !(\nu z)\outprefix{x}{z}}
               {\outlabel{a}{(\nu z)z}}
               {(\nu z;\epsilon)x}
               {0 \parop !(\nu z)\outprefix{x}{z}}
             }
           }
           {\framedprovtrans
              {\frass{P} \otimes \frass{Q}}
              {!(\nu z)\outprefix{x}{z}}
              {\outlabel{a}{(\nu z)z}}
              {(\nu \epsilon;z)x}
              {0 \parop !(\nu z)\outprefix{x}{z}}
           }
         }
         {\framedprovtrans
           {\frass{Q}}
           {(\nu y)\pass{\frass{P}} \parop !(\nu z)\outprefix{x}{z}}
           {\outlabel{a}{(\nu z)z}}
           {(\nu y;z)x}
           {(\nu y)\pass{\frass{P}} \parop 0 \parop !(\nu z)\outprefix{x}{z}}
         }
      }
      {\framedprovtrans
        {\frass{Q}}
        {P}
        {\outlabel{a}{(\nu z)z}
        }
        {(\nu x,y;z)x}
        {(\nu x)((\nu y)\pass{\frass{P}} \parop 0 \parop !(\nu z)\outprefix{x}{z})}
      }
   \end{mathpar}

In the derivation above, notice how the provenance evolves throughout the derivation to
maintain the correspondence between the outer provenance binders and the frame binders.
Two rule applications are particularly noteworthy.
First, the \textsc{Rep} rule pushes $z$ from the outer binders to the inner binders,
because binders underneath the replication operator are not considered part of the
frame (cf.~Definition~\ref{def:frame}).
Second, the \textsc{Par-R} rule adds $y$, the frame binder of the leftmost parallel
component, to the outer binders.

Because both derivations have matching provenances and subjects,
and the frame binders and provenance binders used are the same,
the \textsc{Com} rule allows a derivation as follows:

\begin{mathpar}
\inferrule*[left=\textsc{Com}]
 {
 \framedprovtrans
        {\frass{Q}}
        {P}
        {\outlabel{a}{(\nu z)z}
        }
        {(\nu x,y;z)x}
        {\dots}
        \\
      {\framedprovtrans%
        {\frass{P}}%
        {Q}%
        {\inlabel{x}{z}}%
        {(\nu a,b;\epsilon)a}%
        {\dots}%
      }
  }
       {\framedprovtrans
         {\unit}
         {P \pll Q}
         {\tau}
         {\bot}
         {(\nu z)
           (\dots
            \parop
            \dots
           )}
       }
\end{mathpar}

In order for this rule to be sound, it is important that the frame binders
of $\frameof{P}$, and the provenance binders of the transition from $P$,
have the same ordering.
To see this, suppose we have a version of the \textsc{Com} rule which
allows transitions to be derived when frame and provenance binders are
equal up to reordering.
Call this alternative rule \textsc{Com'}.
We will now argue that \textsc{Com'} is unsound, because we
lose the ability to distinguish synchronisations
between $x$ and $a$ from synchronisations between $x$ and $b$.

First, note that since we identify frames up to alpha, we have
$\frameof{P} = (\nu yx)((x\;y)\cdot \frass{P})$.
By equivariance of $\vdash$ and $\chcon$  we have
\begin{mathpar}
  ((x\;y)\cdot \frass{P}) \otimes \frass{Q} \vdash x \chcon b
  \and
  ((x\;y)\cdot \frass{P}) \otimes \frass{Q} \vdash y \chcon a
\end{mathpar}
In this permuted frame, we can derive an input where $b$ receives from $x$:
\begin{mathpar}
  \framedprovtrans%
        {(x\;y)\cdot \frass{P}}%
        {Q}%
        {\inlabel{x}{z}}%
        {(\nu a,b;\epsilon)b}%
        {(\nu ab)(\inprefix{a}{c}{c}.R \parop S\subst{c}{z} \parop \pass{\frass{Q}})}
\end{mathpar}
Since we identify provenances up to alpha, we also have
\begin{mathpar}
  \framedprovtrans%
        {(x\;y)\cdot \frass{P}}%
        {Q}%
        {\inlabel{x}{z}}%
        {(\nu b,a;\epsilon)a}%
        {(\nu ab)(\inprefix{a}{c}{c}.R \parop S\subst{c}{z} \parop \pass{\frass{Q}})}
\end{mathpar}

Using this transition, and the same derivation of a transition from $P$ as above, we can now apply \textsc{Com'} to derive a synchronisation between $x$ and $b$, despite the fact that $x$ and $b$ are not connected:

\begin{mathpar}
\inferrule*[left=\textsc{Com'}]
 {
 \framedprovtrans
        {\frass{Q}}
        {P}
        {\outlabel{a}{(\nu z)z}
        }
        {(\nu x,y;z)x}
        {\dots}
        \\
      {\framedprovtrans%
        {(x\;y)\cdot \frass{P}}%
        {Q}%
        {\inlabel{x}{z}}%
        {(\nu b,a;\epsilon)a}%
        {\dots}
      }
  }
       {\framedprovtrans
         {\unit}
         {P \pll Q}
         {\tau}
         {\bot}
         {(\nu z)
           (\dots
            \parop
            \dots
           )}
       }
\end{mathpar}

If we push all binders in $P \parop Q$ to the top level,
no such derivation is possible. Thus scope extension fails to hold
with \textsc{Com'}.

With \textsc{Com}, the existence of this alternative transition from $Q$
is unproblematic: it cannot synchronise with any transition from $P$ unless
that transition too uses the permuted frame.

With the same counterexample,
we can also see why it is important that the provenance retains all of the
frame binders, even the vacuous ones: the provenances $(\nu x)x$ and $(\nu y)y$
are equal, so the provenance would contain no information about which prefix
the transition originates from.
\end{exa}

%\subsection{Countering the counterexamples}

%In this section, we will revisit the 

\section{Meta-theory}\label{sec:metatheory}

In this section, we will derive the standard algebraic and congruence laws of strong
and weak bisimulation,
develop an alternative formulation of strong bisimulation in terms of a reduction relation and
barbed congruence, and show that our extension of psi-calculi is conservative.
%While weak equivalences are beyond the scope of the present paper, we believe it is
%possible (if tedious) to adapt the results about weak bisimilarity from
%\cite{DBLP:conf/lics/JohanssonBPV10} to our setting.

\subsection{Strong bisimulation}\label{sec:bisimulation}

We write $\framedtrans{\Psi}{P}{\alpha}{P'}$ as shorthand for $\exists \pi.\;\framedprovtrans{\Psi}{P}{\alpha}{\pi}{P'}$. Bisimulation is then defined exactly as in the original psi-calculi:

\begin{defi}[Strong bisimulation]\label{def:strongbisim} A symmetric relation $\mathcal{R} \subseteq \assertions \times \processes \times \processes$ is a \emph{strong bisimulation} iff for every $(\Psi,P,Q) \in \mathcal{R}$
\begin{enumerate}
  \item $\Psi \otimes \frameof{P} \;\simeq\; \Psi \otimes \frameof{Q}$ (static equivalence)
  \item $\forall \Psi'. (\Psi\otimes\Psi',P,Q) \in \mathcal{R}$ (extension of arbitrary assertion)
  \item If $\framedtrans{\Psi}{P}{\alpha}{P'}$ and $\bn{\alpha} \freshin \Psi, Q$, then there exists $Q'$ such that $\framedtrans{\Psi}{Q}{\alpha}{Q'}$ and $(\Psi,P',Q') \in \mathcal{R}$ (simulation)
\end{enumerate}

We let \emph{bisimilarity} $\bisim$ be the largest bisimulation. We write $\trisimsub{\Psi}{P}{Q}$ to mean $(\Psi,P,Q) \in\;\bisim$, and $P \bisim Q$ for $\trisimsub{\one}{P}{Q}$.
\end{defi}

Clause 3 is the same as for pi-calculus bisimulation.
Clause 1 requires that two bisimilar processes expose statically equivalent
assertion environments.
%Intuitively, this means that an observer must not be able
%to distinguish the processes by testing whether a condition $\phi$ is true
%in the frame of one but not the other.
Clause 2 states that if two processes are bisimilar in an environment, they must be bisimilar
in every extension of that environment. Without this clause, bisimulation is not
preserved by parallel composition.

This definition might raise some red flags for the experienced concurrency theorist.
We allow the matching transition from $Q$ to have any provenance,
irrespective of what $P$'s provenance is.
Hence the \textsc{Com} rule uses information that is ignored for the purposes of bisimulation,
which in most cases would result in a bisimilarity that is not preserved by the parallel operator.

Before showing that bisimilarity is nonetheless compositional,
we will argue that bisimilarity would be too strong if Clause 4 required transitions with
matching provenances.
Consider two distinct terms $M,N$ that are connected to the same channels;
that is, for all $\Psi,K$ we have $\Psi \vdash M \chcon K$ iff $\Psi \vdash N \chcon K$.
We would expect $\overline{M}.0$ and $\outprefix{N}.0$ to be bisimilar because
they offer the same interaction possibilities.
With our definition, they are.
But if bisimulation cared about provenance they would be distinguished, because
transitions originating from $\overline{M}.0$ will have provenance $M$
while those from $\outprefix{N}.0$ will have $N$.

The key intuition is that what matters is not which provenance a transition has, but
which channels the provenance is connected to.
The latter is preserved by Clause 3,
as this key technical lemma hints at:

\begin{lem}(Find connected provenance)\label{lemma:provchaneq}
  \begin{enumerate}
  \item If $\framedprovtrans{\Psi}{P}{\inlabel{M}{N}}{\pi}{P'}$ and $C$ is a finitely supported nominal set, then there exists $\frnames{P},\frass{P},\ve{x},K$ such that $\frameof{P} = \framepair{\frnames{P}}{\frass{P}}$ and $\pi = (\nu\frnames{P};\ve{x})K$ and $\frnames{P} \freshin \Psi,P,M,N,P',C,\ve{x}$ and $\ve{x} \freshin \Psi,P,N,P',C$ and $\Psi \otimes \frass{P} \vdash M \chcon K$.
  \item A similar property for output transitions (elided).
%  \item If $\framedprovtrans{\Psi}{P}{\tau}{\pi}{P'}$, then $\pi = \bot$.
  \end{enumerate}
\end{lem}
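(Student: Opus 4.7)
The plan is to prove both statements simultaneously by rule induction on the derivation of $\framedprovtrans{\Psi}{P}{\alpha}{\pi}{P'}$. The finitely supported parameter $C$ in the statement is there precisely so that, at each inductive step, one can instantiate the induction hypothesis with $C$ enlarged by whatever extra data the surrounding rule introduces (for instance $Q$ and $\frnames{Q}$ when descending through \textsc{Par-L}, or $b$ when descending through \textsc{Scope} or \textsc{Open}), so that alpha-renaming of the subprocess's frame binders cannot conflict with the ambient context. The invariant I would maintain along the induction is exactly the conclusion of the lemma: the outer binding sequence of $\pi$ equals $\frnames{P}$, the inner binding sequence $\ve{x}$ holds the case/replication-induced binders, and $\Psi\otimes\frass{P}\vdash M\chcon K$ where $K$ is the term under the provenance binders.

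The base case \textsc{In} is immediate: the subject process is prefix-guarded, so $\frameof{P}=\framepair{\epsilon}{\unit}$, the provenance is just the prefix's subject, and the rule's premise $\Psi\vdash M\chcon K$ supplies the required entailment modulo absorption of $\unit$. The inductive cases then split into two flavours. In \textsc{Par-L}, \textsc{Par-R}, \textsc{Scope}, and \textsc{Open}, the provenance operations appearing in the conclusion ($\pi\downarrow\frnames{Q}$, $(\nu\frnames{P})\pi$, and $(\nu b)\pi$) are designed to mirror exactly how $\frameof{\cdot}$ grows when a parallel partner or a restriction is placed around the subprocess, so the binding-sequence alignment is automatic; the required entailment is either the IH itself (\textsc{Scope}, \textsc{Open}) or follows by commutativity of $\otimes$ in the abelian monoid $(\assertions/{\simeq},\otimes,\unit)$ (\textsc{Par-L}, \textsc{Par-R}). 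In \textsc{Case} and \textsc{Rep}, the subprocess is assertion-guarded; the $\downarrow$ operator moves all outer binders from the IH's provenance into the inner sequence $\ve{x}$, which matches the trivial frame $\framepair{\epsilon}{\unit}$ of the enclosing case/replication, and the IH's entailment $\Psi\otimes\frass{P_i}\vdash M\chcon K$ reduces to $\Psi\otimes\unit\vdash M\chcon K$ because guardedness forces $\frass{P_i}\simeq\unit$ and static equivalence is preserved by $\otimes$.

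The main obstacle I expect is the nominal bookkeeping around freshness. Each inductive step produces several freshness obligations on the $\frnames{P}$ and $\ve{x}$ returned in the conclusion, some inherited from the IH and some from the outer rule's own side conditions, and these must be discharged simultaneously. The usual tactic is to fix up front a sufficiently large choice of fresh names by enlarging $C$ before invoking the IH, and then use equivariance of $\vdash$, $\chcon$ and $\otimes$ to alpha-rename the IH's witness binders so that everything lines up. The two cases I would treat with the most care are \textsc{Par-L} and \textsc{Par-R}, where both $\frnames{P}$ and $\frnames{Q}$ must end up fresh for each other and for the ambient $\Psi, P\pll Q, M, N, P'\pll Q, C, \ve{x}$, and \textsc{Case}/\textsc{Rep}, where the reduction from $\frass{P_i}$ to $\unit$ relies essentially on the guardedness assumption on the branches of case and on the body of replication.
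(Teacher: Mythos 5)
Your proposal is correct and takes essentially the same approach as the paper: the paper's proof is exactly a routine induction on the transition derivation (mechanised in Isabelle), with the finitely supported parameter $C$ serving precisely the freshness-bookkeeping role you describe. The case analysis you outline---the provenance operations $\pi\downarrow\frnames{Q}$, $(\nu\frnames{P})\pi$, $(\nu b)\pi$ mirroring frame construction in the \textsc{Par}/\textsc{Scope}/\textsc{Open} cases, and the reduction of the frame assertion to $\unit$ via guardedness in \textsc{Case}/\textsc{Rep}---is the intended content of that induction.
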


\begin{proof}
  Formally proven in Isabelle, by a routine induction.
\end{proof}

\noindent In words,
the provenance of a transition is always connected to its subject,
and the frame binders can always be chosen sufficiently fresh for any context.
This simplifies the proof that bisimilarity is preserved by parallel:
in the original psi-calculi, one of the more challenging aspects of this proof is
finding sufficiently fresh subjects to use in the \textsc{Com-Old} rule,
and then using associativity and symmetry to connect them
(cf.~\cite[Lemma 5.11]{bengtson.johansson.ea:psi-calculi-long}).
By Lemma~\ref{lemma:provchaneq} we already have a sufficiently
fresh subject: our communication partner's provenance.

%This obviates the need for arguably the thorniest lemma of the original psi-calculi:
%the \emph{switching lemma},
%which can be used to simultaneously migrate a transition to a bisimilar frame and find
%a sufficiently fresh subject and associated channel equivalence judgement.
%Here Lemma~\ref{lemma:provchaneq} handles the latter problem only, while another lemma
%(present also in the original development) handle the former problem.

\begin{thm}[Congruence properties of strong bisimulation]\label{thm:bisimpres}\
\begin{enumerate}
  \item $\trisimsub \Psi P Q \quad \Rightarrow \quad \trisimsub{\Psi}{P \parop R}{Q \parop R}$
  \item $\trisimsub \Psi P Q \quad \Rightarrow \quad \trisimsub{\Psi}{(\nu x)P}{(\nu x)Q}$ if $x \freshin \Psi$
  \item $\trisimsub \Psi {P_G}{Q_G} \quad \Rightarrow \quad \trisimsub{\Psi}{! P_G}{\;! Q_G}$
  \item \label{case:case} $\forall i. \trisimsub {\Psi}{P_i}{Q_i} \quad \Rightarrow \quad \trisimsub{\Psi}{\caseonly\;{\ci{\vec{\varphi}}{\vec{P}}}}{\caseonly\;{\ci{\vec{\varphi}}{\vec{Q}}}}$ if $\vec{P}, \vec{Q}$ are assertion-guarded
  \item $\trisimsub \Psi P Q \quad \Rightarrow \quad \trisimsub{\Psi}{\outprefix{M}{N}.P}{\outprefix{M}{N}.Q}$
%  \item $\forall \vec{T}. \trisimsub \Psi {P\subst{\vec{x}}{\vec{T}}} {Q\subst{\vec{x}}{\vec{T}}} \quad \Rightarrow \quad \trisimsub{\Psi}{\inprefix{M}{\vec{x}}{N}.P}{\inprefix{M}{\vec{x}}{N}.Q}$ if $\vec{x} \freshin \Psi$
\end{enumerate}
\end{thm}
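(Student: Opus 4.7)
My overall plan is to prove each part by coinduction: for each operator $F$ in question, I exhibit a candidate relation $\mathcal{R}_F$ containing all triples $(\Psi, F[P], F[Q])$ with $\trisimsub{\Psi}{P}{Q}$, and verify the three clauses of Definition~\ref{def:strongbisim}. The static-equivalence and extension clauses are routine, since bisimilarity itself enjoys them and since static equivalence is preserved by composition of assertions; the real work is the simulation clause, where a transition from $F[P]$ must be matched by one from $F[Q]$ with continuations again in $\mathcal{R}_F$.

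Part 5 (output prefix) is the cleanest: the only applicable rule is \textsc{Out}, and $Q$ matches it using the same connectivity judgement, leaving continuations $(\Psi,P,Q)$, which is given. Part 4 (case) is analogous: the \textsc{Case} premises transfer directly from the left to the right. Part 2 (restriction) is a routine case analysis on \textsc{Scope} and \textsc{Open}, appealing to the extension clause of bisimilarity for the environments that arise. Part 3 (replication) requires a larger candidate, containing pairs $(!P_G \parop R,\, !Q_G \parop S)$ where $R,S$ are parallel compositions of pairwise bisimilar unfoldings of $P_G,Q_G$; this is closed under the \textsc{Rep} rule, and interactions between copies are handled via Part 1.

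Part 1 (parallel composition) is the main obstacle. I would take the symmetric closure of
\[
\{(\Psi,\; (\nu\ve{z})(P \parop R),\; (\nu\ve{z})(Q \parop R)) \;:\; \trisimsub{\Psi \otimes \frass{R}}{P}{Q},\; \ve{z} \freshin \Psi\}
\]
as the candidate (restriction wrappers handle scope extrusion via Part 2), and induct on the derivation of a transition from $P \parop R$. The \textsc{ParL}/\textsc{ParR} cases are straightforward: push $\frass{R}$ (resp.~$\frass{P}$) into the environment using the extension clause and apply the induction hypothesis. The challenge is \textsc{Com}. Suppose $P$ outputs on label $\outlabel{M}{(\nu\ve{a})N}$ with provenance $(\nu\frnames{P};\ve{x})K$, and $R$ inputs on $\inlabel{K}{N}$ with provenance $(\nu\frnames{R};\ve{y})M$. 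Bisimilarity yields a matching output from $Q$ with the same label but a potentially different provenance subject $K'$. To reassemble \textsc{Com} with $R$, we need $R$ to input with subject $K'$; here Lemma~\ref{lemma:provchaneq} delivers $\Psi \otimes \frass{R} \otimes \frass{Q} \vdash K' \chcon M$, which by commutativity of $\otimes$ is also $\Psi \otimes \frass{Q} \otimes \frass{R} \vdash K' \chcon M$, exactly the premise the \textsc{In} rule needs to rederive $R$'s input with subject $K'$ in environment $\Psi \otimes \frass{Q}$, leaving the continuation $R'$ unchanged. \textsc{Com} then delivers the required matching $\tau$-transition, and the resulting pair lies in the candidate by construction.

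The crucial simplification over the proof for the original psi-calculi is that we no longer need to synthesise a fresh channel subject and stitch together three connectivity judgements using symmetry and transitivity: Lemma~\ref{lemma:provchaneq} directly supplies a subject (the communication partner's provenance) that is both sufficiently fresh and already connected to the right label, and commutativity of $\otimes$ is then the only algebraic property of assertions that the argument requires.
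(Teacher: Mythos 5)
Your overall strategy---per-operator candidate relations, coinduction, and Lemma~\ref{lemma:provchaneq} supplying a connected provenance subject in the \textsc{Com} subcases---is the same as the paper's (whose proof is itself a short sketch of the Isabelle development, outlining the analogous subcase for replication). But there is a genuine gap at the crux: your simulation argument never invokes the static-equivalence clause (clause 1) of Definition~\ref{def:strongbisim}, and the step you substitute for it would fail. In the \textsc{Com} case, $R$'s input was originally derived in the environment $\frass{P} \otimes \Psi$, whereas the reassembled \textsc{Com} for $Q \parop R$ needs that input in the environment $\frass{Q} \otimes \Psi$. The judgement $\Psi \otimes \frass{Q} \otimes \frass{R} \vdash K' \chcon M$ is \emph{not} ``exactly the premise the \textsc{In} rule needs'': $R$ is an arbitrary process, not a bare prefix, so the rederivation is a separate lemma proved by induction on $R$'s derivation, and that derivation may pass through \textsc{Case} rules whose conditions were entailed in the old environment $\frass{P} \otimes \Psi \otimes \cdots$; connectivity of $K'$ to $M$ says nothing about whether those conditions remain entailed after swapping $\frass{P}$ for $\frass{Q}$. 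That transfer is exactly what clause 1 provides ($\Psi \otimes \frass{R} \otimes \frameof{P} \simeq \Psi \otimes \frass{R} \otimes \frameof{Q}$), combined with a lemma stating that transitions are preserved when the environment is replaced by a statically equivalent one.

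The same omission surfaces in your \textsc{ParR} case, where ``apply the induction hypothesis'' cannot work: there the transition comes from $R$, which is common to both sides, so there is no hypothesis about it to apply; what is needed is, again, clause 1 together with the environment-switching lemma, to move $R$'s transition from $\frass{P} \otimes \Psi$ to $\frass{Q} \otimes \Psi$. (This is the very reason clause 1 appears in the definition of bisimulation: without it, assertion processes with inequivalent assertions would be bisimilar, yet a parallel \textsc{Case} context would distinguish them.) Consequently your closing claim that commutativity of $\otimes$ is ``the only algebraic property of assertions that the argument requires'' is too strong: the static-equivalence transfers also rely on associativity and on $\otimes$ preserving $\simeq$, i.e.\ on the full abelian-monoid structure of valid psi-calculi. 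With these ingredients restored, your sketch does line up with the mechanised proof.
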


\begin{proof}
  Formally proven in Isabelle. All proofs are by coinduction.
  The most interesting cases are parallel and replication,
  where Lemma~\ref{lemma:provchaneq} features prominently.
  We briefly outline a \textsc{Com} subcase of the replication case,
  where the candidate relation is
  \[\{(\Psi, R \parop !P, R \parop !Q). \trisimsub \Psi P Q \wedge \mbox{$P,Q$ are assertion guarded}\}\]
  Suppose $P \bisim Q$ and that $!P$ derives a $\tau$ transition from
communication between two unfolded copies of $P$, with
input subject $M$ and output subject $K$.
We need to mimic the same communication between two copies of $Q$, but after
using $\trisimsub \Psi P Q$ to obtain a matching input transition, the subject $M$ is not useful
to derive a communication since it is $P$'s provenance, not $Q$'s.
However, we can obtain eligible subjects $M',K'$ by repeatedly applying Lemma~\ref{lemma:provchaneq}.
\end{proof}

In Theorem~\ref{thm:bisimpres}.\ref{case:case}, $P_i$ is the $i$:th element of
$\vec{P}$, and similarly for $Q_i$. The index variable $i$ ranges over the
length of the sequences $\vec{\varphi},\vec{P},\vec{Q}$, which we assume are
equal.

\begin{thm}[Algebraic laws of strong bisimulation]\label{thm:strong-struct}\
  
\begin{enumerate}
  \item $\trisimsub{\Psi}{P}{P \parop \nil}$
 \item $\trisimsub{\Psi}{P\parop ( Q \parop R)}{(P \parop Q) \parop R}$
 \item $\trisimsub{\Psi}{P \parop Q}{Q \parop P}$
 \item $\trisimsub{\Psi}{(\nu a)\nil}{\nil}$
 \item $\trisimsub{\Psi}{P \parop (\nu a) Q}{(\nu a)(P \parop Q)}\mbox{  if $a \freshin P$}$
 \item $\trisimsub{\Psi}{\outprefix{M}{N}.(\nu a)P}{(\nu a)\outprefix{M}{N}.P}\mbox{  if $a \freshin M, N$}$
 \item $\trisimsub{\Psi}{\inprefix{M}{\vec{x}}{N}.(\nu a)P}{(\nu a)\inprefix{M}{\vec{x}}{N}.P}\mbox{  if $a \freshin \vec{x},M,N$}$
 \item $\trisimsub{\Psi}{!P}{P \parop !P}$
 \item $\trisimsub{\Psi}{\caseonly\;{\ci{\vec{\varphi}}{\vec{(\nu a)P}}}}{(\nu a)\caseonly\;{\ci{\vec{\varphi}}{\vec{P}}}}\mbox{  if $a \freshin \vec{\varphi}$}$
 \item $\trisimsub{\Psi}{(\nu a)(\nu b)P}{(\nu b)(\nu a)P}$
\end{enumerate}
\end{thm}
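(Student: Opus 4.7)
The plan is to treat each of the ten laws by exhibiting an explicit symmetric candidate relation $\mathcal{R}$ and verifying the three clauses of Definition~\ref{def:strongbisim}. Because bisimulation uses the provenance-erased relation $\framedtrans{\Psi}{P}{\alpha}{P'}$, we only need to match labels and residuals; provenances matter only for making sure a matching transition exists at all. For each law I take as candidate the symmetric, extension-closed relation generated by the equation, i.e.~pairs of the form $(\Psi\otimes\Psi',\LHS,\RHS)$ for arbitrary $\Psi'$; clause~2 is then immediate, and clause~1 follows by routine frame calculations (for instance, in law~5, $\frameof{P \parop (\nu a)Q} = \frameof{P}\otimes(\nu a)\frameof{Q}$ which, alpha-renaming $a$ fresh for $\frameof{P}$, is $(\nu a)(\frameof{P}\otimes\frameof{Q}) = \frameof{(\nu a)(P \parop Q)}$). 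For clause~3 I proceed by case analysis on the last rule of the premise transition and repackage it on the other side.

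The laws that are purely structural reshuffles---1 (identity), 2 and 3 (associativity and commutativity of parallel), 4 ($(\nu a)\nil\bisim\nil$), 8 (unfolding of replication), and 10 (exchange of restrictions)---follow the established pi/psi-calculus pattern: each derivation is rebuilt by permuting applications of \textsc{ParL}, \textsc{ParR}, \textsc{Scope} and \textsc{Rep}. Law~4 is vacuous (no transitions from either side, identical unit frames), and law~8 is handled by relating $!P$ with $P \parop !P$ and observing that \textsc{Rep} performs exactly this unfolding. Provenances are manipulated by these rules (via $\pi\mathord{\downarrow}\ve{z}$, $(\nu\frnames{P})\pi$ etc.) but, since bisimulation ignores them, those manipulations are immaterial to matching.

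The real work lies in the scope extension laws 5, 6, 7 and~9. Here the non-communication subcases are again obtained by permuting \textsc{Scope} with one of \textsc{ParL}/\textsc{ParR}, \textsc{In}/\textsc{Out} or \textsc{Case}; the side conditions such as $a\freshin M,N$ in laws 6 and~7 are exactly what is needed to commute the restriction past the prefix rule. The delicate subcase is \textsc{Com} in law~5: a $\tau$-transition from $P \parop (\nu a)Q$ pairs an output of (say) $P$ with an input of $(\nu a)Q$ derived via \textsc{Scope} from a $Q$-transition, and we must mimic this by a $\tau$-transition inside $(\nu a)(P \parop Q)$. I would use Lemma~\ref{lemma:provchaneq} to choose the frame binders of $P$ (and thus the outer binders of $P$'s provenance) fresh for $a,\Psi,Q$ and for the subject $K$ of the opposing transition, and symmetrically for $Q$; the resulting input and output transitions then fit the \textsc{Com} side conditions and their provenance subjects agree with the partner's label, so \textsc{Com} followed by \textsc{Scope} gives the required transition.

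The main obstacle I expect is precisely this bookkeeping of provenance binders inside \textsc{Com} in the scope extension cases. Because \textsc{Com} insists that the outer provenance binders $\ve{b}_P,\ve{b}_Q$ be syntactically equal to the frame binders of the two components, one cannot simply use alpha-equivalence to fix mismatches after the fact; one has to choose the representatives up front. Lemma~\ref{lemma:provchaneq} is tailor-made for this, giving the freshness guarantees on both $\frnames{P}$ and the inner binders $\ve{x}$ with respect to any chosen context $C$. Once the binders are lined up, the laws reduce to structural rearrangements already familiar from the original psi-calculi, which is why the Isabelle proofs, though long, are not conceptually harder than those in~\cite{bengtson.johansson.ea:psi-calculi-long}.
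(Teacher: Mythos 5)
Your proposal is correct and takes essentially the same route as the paper: the paper's proof is precisely a coinduction for each law on an explicit candidate relation, mechanised in Nominal Isabelle, which matches your strategy of exhibiting extension-closed symmetric candidates and discharging the clauses of Definition~\ref{def:strongbisim} by case analysis on the last rule applied. Your identification of the \textsc{Com} subcase in the scope-extension laws as the delicate point, resolved by using Lemma~\ref{lemma:provchaneq} to choose frame/provenance binders sufficiently fresh up front, is consistent with the role the paper assigns to that lemma in its bisimulation proofs.
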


\begin{proof}
  Formally proven in Isabelle. All proofs are by coinduction.
\end{proof}

\noindent
Note that bisimilarity is not preserved by input, for the same reasons as the \pic.
As in the \pic, we can define \emph{bisimulation congruence} as the substitution
closure of bisimilarity, and thus obtain a true congruence which satisfies all the
algebraic laws above.
We have verified this in Nominal Isabelle, following~\cite{bengtson.johansson.ea:psi-calculi-long}.

\subsection{Weak bisimulation}\label{sec:weakbisimulation}

We have also proved the standard algebraic and congruence properties of weak bisimulation.
The results in this section were established for the original psi-calculi by
Johansson et al.~\cite{DBLP:conf/lics/JohanssonBPV10}; our contribution is to lift them to
psi-calculi without channel symmetry and transitivity. As for strong bisimulation,
it turns out that we may disregard provenances for the purposes of weak bisimulation, so we
can reuse the original definitions verbatim.

The definition of weak bisimulation is technically complicated in psi-calculi because of the
delicate interplay between assertions and reductions. For example, in the
pi-calculus weak bisimulation equates $P$ and $\tau.P$, but in psi-calculi this equation
cannot be admitted: $P$ may contain top-level assertions that disable interaction possibilities in parallel processes. Hence there may be situations where an observable action originating
from $Q$ is available in $Q \parop \tau.P$ (where $P$ has not yet disabled it)
but unavailable in $Q \parop P$.

For a comprehensive motivation of the definitions, we refer to Johansson et al.~\cite{DBLP:conf/lics/JohanssonBPV10}; below we restate the pertinent definitions for completeness.

\begin{defi}[Weak transitions]
\label{def:weakTrans}
$\Psi \frames \wtrans{P}{}{P'}$ means that either $P = P'$ or there exists $P''$ such that $\Psi \frames \trans{P}{\tau}{P''}$ and $\Psi \frames \wtrans{P''}{}{P'}$.
We write $\Psi \frames \wtrans{P}{\alpha}{P'}$ to mean that there exists $P'',P'''$ such that
$\Psi \frames \wtrans{P}{}{P''}$ and $\framedtrans{\Psi}{P''}{\alpha}{P'''}$ and
$\Psi \frames \wtrans{P'''}{}{P'}$.
\end{defi}
\begin{defi}
$P$ statically implies $Q$ in the environment $\Psi$, written $P \simplies_\Psi Q$, if\[%
\forall \varphi. \; \Psi \ftimes \frameof{P} \vdash \varphi \;\Rightarrow \;
\Psi \ftimes \frameof{Q} \vdash \varphi\]
If $\Psi = \unit$ we may write $P\simplies Q$.
\end{defi}

\begin{defi}[Weak bisimulation]
A {\em weak bisimulation}
 $\mathcal R$ is a ternary relation between assertions and pairs of agents such that
 ${\mathcal R}(\Psi,P,Q)$ implies all of
 \begin{enumerate}
 \item Weak static implication: 
\[\begin{array}{l}
\forall \Psi' \exists Q'', Q'. \\
 \quad \Psi \frames \wtrans{Q}{}{Q''}\quad
\wedge\quad P \simplies_\Psi Q'' \quad  \wedge\\
\quad \Psi \ftimes \Psi' \frames \wtrans{Q''}{}{Q'}  \quad \wedge \quad
 {\mathcal R}(\Psi\ftimes\Psi',P,Q')
\end{array}
\]
 \item
   Symmetry: ${\mathcal R}(\Psi,Q,P)$
 \item
 Extension of arbitrary assertion:\\  
$\forall \Psi'.\ {\mathcal R}(\Psi \ftimes \Psi',P,Q)$
 \item   Weak simulation:
for all $\alpha, P'$ such that $\bn{\alpha}\freshin \Psi,Q$ and $\framedtrans{\Psi}{P}{\alpha}{P'}$ it holds
\[\begin{array}{@{}lrl@{}}
\mbox{if}\;\alpha = \tau: \exists Q' . %& &\\
\;\Psi \frames \wtrans{Q}{}{Q'} \quad \wedge \quad {\mathcal R}(\Psi,P',Q') \\
\mbox{if}\;\alpha \neq \tau: \forall \Psi' \exists Q'', Q'''. & &\\
\quad\Psi \frames  \wtrans{Q}{}{Q'''}\quad \wedge \quad  P \simplies_\Psi Q'''
\quad \wedge\\
\quad \Psi \frames \trans{Q'''}{\alpha}{Q''} \quad \wedge \\
\quad  \exists Q'. \;\Psi \ftimes \Psi' \frames \wtrans{Q''}{}{Q'} 
 \;\; \wedge \;\; {\mathcal R}(\Psi\ftimes\Psi',P',Q')
\end{array}\]
\end{enumerate}
\label{def:noweakwbisim}
We define $P \wbisim_\Psi Q$ to mean that there exists a weak bisimulation ${\mathcal R}$
such that ${\mathcal R}(\Psi,P,Q)$ and write $P \wbisim Q$ for $P \wbisim_\unit Q$.
\end{defi}

Weak bisimulation thus defined includes strong bisimulation, and thus
satisfies all the usual structural laws.
It is not preserved by $\caseonly$ and input, for the same reasons
as $+$ and input do not preserve weak bisimulation in the pi-calculus.
We employ the standard solution to obtain a congruence:
all initial $\tau$ steps must be
simulated by at least one $\tau$ step, and we furthermore close the relation under all
substitutions.

\begin{defi}[Weak congruence]
$P$ and $Q$ are weakly $\tau$-bisimilar, written $\taubisim{\Psi}{P}{Q}$, if $P
\wbisim_\Psi Q$ and the following holds:
for all $P'$ such that
$\framedtrans{\Psi}{P}{\tau}{P'}$ there exists $Q'$ such that
$\Psi\frames \wtrans{Q}{\tau}{Q'}  \; \wedge \; P'\wbisim_{\Psi} Q'$,
and similarly with the roles of $P$ and $Q$ exchanged. We define $P \wcong Q$ to
mean that for all $\Psi$, and for all well-formed substitution sequences $\ve{\sigma}$,
it holds that
$\taubisim{\Psi}{P\ve{\sigma}}{Q\ve{\sigma}}$.
     \end{defi}

The following theorems have been formally proven in Nominal Isabelle:

\begin{thm}\label{thm:weak-struct}$\underset{{\rm tau}}{\wbisim}$ satisfies all the algebraic laws of
  $\bisim$ established in Theorem~\ref{thm:strong-struct}.
\end{thm}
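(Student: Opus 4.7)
The natural approach is to prove the theorem by inclusion rather than by re-establishing each law from scratch: I plan to show that $\bisim \;\subseteq\; \underset{{\rm tau}}{\wbisim}$ (as a ternary relation indexed by $\Psi$), and then conclude each algebraic law of Theorem~\ref{thm:strong-struct} for $\underset{{\rm tau}}{\wbisim}$ as an immediate corollary of the corresponding law for $\bisim$.

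The key lemma is thus: if $\trisimsub{\Psi}{P}{Q}$ then $\taubisim{\Psi}{P}{Q}$. To prove it, I would take $\bisim$ itself as a candidate weak bisimulation and verify each clause of Definition~\ref{def:noweakwbisim}. \textbf{Extension} is inherited verbatim from clause~2 of Definition~\ref{def:strongbisim}. For \textbf{weak static implication}, given $\Psi'$ we can choose $Q'' = Q$ (a zero-step weak transition) and then, using extension, obtain $Q' = Q$ in $\Psi \otimes \Psi'$; the required $P \simplies_\Psi Q''$ follows from the static equivalence clause $\Psi \otimes \frameof{P} \simeq \Psi \otimes \frameof{Q}$ by taking one direction of the biimplication. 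For \textbf{weak simulation}, the $\tau$ case is direct: a strong $\tau$-transition from $Q$ is in particular a weak one. For the $\alpha \neq \tau$ case, given $\Psi'$ we take $Q''' = Q$, use clause~3 of strong bisimulation to obtain $Q''$ with $\framedtrans{\Psi}{Q}{\alpha}{Q''}$ and $(\Psi,P',Q'') \in \;\bisim$, then set $Q' = Q''$ and invoke extension to place the resulting pair in $\bisim$ under $\Psi \otimes \Psi'$. Finally, the extra $\tau$-bisimilarity condition is trivial: a strong $\tau$-transition from $P$ is matched by a strong (hence at least one-step weak) $\tau$-transition from $Q$, and strong bisimilarity of the residuals entails weak bisimilarity.

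Once $\bisim \;\subseteq\; \underset{{\rm tau}}{\wbisim}$ is established, each of the ten laws in Theorem~\ref{thm:strong-struct} transfers directly. The side-conditions (e.g.~$a \freshin P$ in law~5, or $\vec{P}$ assertion-guarded in the replication law) carry over unchanged because they are imposed on the same syntactic shapes.

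The only real subtlety, and the step I would check most carefully in the Isabelle development, is the weak static implication clause: weak bisimulation asks only for one-sided static implication but threads it through a weak transition $Q \Rightarrow Q''$ that can observe intermediate assertion environments, whereas strong bisimulation gives us static equivalence immediately at $Q$ itself. Taking $Q'' = Q$ (a zero-step weak transition) sidesteps this, but one must confirm that Definition~\ref{def:weakTrans} admits the zero-step case and that clause~2 of Definition~\ref{def:strongbisim} suffices to re-enter the relation at $\Psi \otimes \Psi'$. Beyond that, the argument is essentially bookkeeping, and no new coinductive reasoning about provenances is required, since by the discussion preceding Lemma~\ref{lemma:provchaneq} provenances are already irrelevant for bisimilarity.
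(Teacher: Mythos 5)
Your proposal is correct and matches the paper's proof: the paper likewise establishes ${\bisim} \subseteq {\wbisim}$ by coinduction using $\bisim$ itself as the candidate weak bisimulation, and then transfers the laws of Theorem~\ref{thm:strong-struct} via this inclusion (together with the easy extra $\tau$-matching condition, exactly as you outline). Your clause-by-clause verification, including the zero-step instantiations $Q''=Q'''=Q$ and the use of static equivalence to obtain $P \simplies_\Psi Q$, is the detail the paper leaves to its Isabelle formalisation.
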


\begin{proof}
  Formally proven in Isabelle. The proof relies on the fact that
  ${\bisim} \subseteq {\wbisim}$, which we show by coinduction,
  using $\bisim$ as a candidate relation.
\end{proof}

\begin{thm}\label{thm:weak-bisim-cong}$\wbisim$ satisfies all the congruence properties of  $\bisim$ established in Theorem~\ref{thm:bisimpres} except~\ref{thm:bisimpres}.4.
\end{thm}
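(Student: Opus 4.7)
The plan is to proceed by coinduction on Definition~\ref{def:noweakwbisim}, following the structure of the proof of Theorem~\ref{thm:bisimpres}. For each of the four remaining clauses we exhibit a candidate weak bisimulation: for parallel, essentially $\{(\Psi, P \parop R, Q \parop R) : P \wbisim_\Psi Q\}$; for restriction, $\{(\Psi, (\nu x)P, (\nu x)Q) : P \wbisim_\Psi Q,\; x \freshin \Psi\}$; for replication, $\{(\Psi, R \parop\, !P_G, R \parop\, !Q_G) : P_G \wbisim_\Psi Q_G\}$ as in the strong case; and for output prefix, the closure of $\wbisim$ under a common outer prefix $\outprefix{M}{N}$. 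Each candidate must be closed under assertion extension and combined with the weak static implication clause so that clauses 1--3 of Definition~\ref{def:noweakwbisim} hold automatically, leaving only the simulation clause to check per case.

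The same key insight as in Section~\ref{sec:bisimulation} applies: provenances can still be ignored. Whenever we must reconstruct a \textsc{Com} transition on the right-hand side, Lemma~\ref{lemma:provchaneq} provides sufficiently fresh binders and a subject that is connected to the communication partner's provenance in the appropriate frame. This lets us lift the strong-case arguments essentially verbatim, including the replication \textsc{Com} subcase sketched after Theorem~\ref{thm:bisimpres}: after using $\wbisim_\Psi$ to obtain a matching weak transition from $Q$ (or from two unfolded copies of $Q$), a further application of Lemma~\ref{lemma:provchaneq} supplies the subjects needed to derive a synchronisation, even though these differ in general from the ones used on the $P$ side.

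The main obstacle is the parallel case, specifically the weak simulation of a $\tau$-reduction of $P \parop R$ obtained by \textsc{Com} between $P$ and $R$. The matching weak $\tau$-sequence from $Q \parop R$ factors into (i) a $\tau$-sequence from $Q$ supplied by the weak static implication clause of $P \wbisim_\Psi Q$, (ii) a single synchronising step between the resulting $Q$-state and $R$ in a suitable assertion environment, and (iii) further $\tau$-closure. At each stage the relevant environment depends on the evolving frame of $R$, and the binders chosen via Lemma~\ref{lemma:provchaneq} must remain globally consistent with the frame bookkeeping discussed in Section~\ref{sec:definitions}. The observable-action subcase in the parallel proof decomposes analogously, and the replication case is the same argument iterated over arbitrarily many unfoldings. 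Restriction and output prefix are routine by comparison: the prefix blocks any mismatch until it fires, after which the continuations are already related by $\wbisim$, and restriction is handled by a standard scope-extrusion argument using $x \freshin \Psi$.
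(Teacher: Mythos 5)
Your proposal takes essentially the same approach as the paper: the paper's proof is precisely a coinduction (mechanised in Nominal Isabelle) that lifts the strong-case arguments of Theorem~\ref{thm:bisimpres} to the weak setting, disregarding provenances and relying on Lemma~\ref{lemma:provchaneq} to reconstruct \textsc{Com} transitions, exactly as you describe. The paper records no detail beyond ``formally proven in Isabelle, by coinduction,'' so your elaboration of candidate relations and the parallel/replication subcases is a faithful expansion of its argument rather than a departure from it.
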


\begin{proof}
  Formally proven in Isabelle, by coinduction.
\end{proof}

\begin{thm}\label{thm:weak-cong-cong}Weak congruence $\wcong$ is a congruence wrt.~all operators of psi-calculi.
\end{thm}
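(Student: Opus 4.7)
The plan is to unfold $\wcong$ and show, for every context $C$ and well-formed substitution sequence $\ve{\sigma}$, that $P_i \wcong Q_i$ implies $\taubisim{\Psi}{C[\vec{P}]\ve{\sigma}}{C[\vec{Q}]\ve{\sigma}}$. Two structural facts make this tractable: $\wcong$ is by definition closed under substitution and implies both $\wbisim$ and the initial-$\tau$-step property of $\taubisim{}$; and every operator commutes with substitution modulo renaming of bound names. I split the operators into three groups by source of difficulty: (i) parallel, restriction, output prefix, and replication, where $\wbisim$ already suffices by Theorem~\ref{thm:weak-bisim-cong}; (ii) input, where substitution closure is essential; and (iii) $\caseonly$, where the initial-$\tau$ requirement is essential.

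For the operators in group (i), Theorem~\ref{thm:weak-bisim-cong} already gives $C[P]\ve{\sigma} \wbisim C[Q]\ve{\sigma}$ after applying $P\ve{\sigma} \wbisim Q\ve{\sigma}$ (derived from $P \wcong Q$). This discharges the weak simulation, static equivalence, and weak static implication clauses of $\taubisim{\Psi}{C[P]\ve{\sigma}}{C[Q]\ve{\sigma}}$. It remains to check the initial-$\tau$-step condition: any $\tau$-step of $C[P]\ve{\sigma}$ either originates from a $\tau$-step of $P\ve{\sigma}$ (matched by at least one $\tau$ from $Q\ve{\sigma}$ using $\taubisim{\Psi}{P\ve{\sigma}}{Q\ve{\sigma}}$ and wrapped through the context via Theorem~\ref{thm:weak-bisim-cong}) or is a structural step of the context that is available on both sides symmetrically.

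For the input prefix, the \textsc{In} rule is the only applicable rule and produces an input-labelled step (never $\tau$) with continuation $P\ve{\sigma}\subst{\vec{x}}{\vec{L}}$. The initial-$\tau$ clause is vacuous; the weak simulation clause is discharged by the matching \textsc{In} transition of $\inprefix{M\ve{\sigma}}{\vec{x}}{N\ve{\sigma}}.Q\ve{\sigma}$, and substitution closure of $\wcong$ yields $P\ve{\sigma}\subst{\vec{x}}{\vec{L}} \wbisim Q\ve{\sigma}\subst{\vec{x}}{\vec{L}}$ as the required residual relation.

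The main obstacle is $\caseonly$. Assume $P_i \wcong Q_i$ for each $i$; the goal is $\taubisim{\Psi}{\caseonly\;\ci{\vec{\varphi}\ve{\sigma}}{\vec{P}\ve{\sigma}}}{\caseonly\;\ci{\vec{\varphi}\ve{\sigma}}{\vec{Q}\ve{\sigma}}}$. Every transition of the left-hand case is, by \textsc{Case}, lifted from a branch transition $\framedtrans{\Psi}{P_i\ve{\sigma}}{\alpha}{P'}$ under $\Psi \vdash \varphi_i\ve{\sigma}$. For $\alpha = \tau$ I invoke $\taubisim{\Psi}{P_i\ve{\sigma}}{Q_i\ve{\sigma}}$ to obtain a matching weak $\tau$-transition of $Q_i\ve{\sigma}$ of length at least one; applying \textsc{Case} to its first step under the same $\varphi_i$ yields at least one $\tau$-step of the right-hand case, and the remainder of the $\tau$-sequence lies entirely within $Q_i\ve{\sigma}$ (no longer inside any $\caseonly$ context). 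For $\alpha \neq \tau$, $P_i\ve{\sigma} \wbisim Q_i\ve{\sigma}$ provides the matching weak transition, again wrapped through \textsc{Case}. The static equivalence and weak static implication clauses are trivial because $\frameof{\caseonly\cdots} = \unit$ by Definition~\ref{def:frame}, and assertion-guardedness of the branches ensures no branch contributes to the environment before selection, so the assumption on the branches suffices without further assertion-environment manipulation.
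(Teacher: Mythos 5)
Your overall architecture coincides with the paper's: the paper's (Isabelle) proof likewise reuses Theorem~\ref{thm:weak-bisim-cong} for the operators where $\wbisim$ is already compositional, and discharges the remaining obligations directly---input via substitution closure of $\wcong$, $\caseonly$ via the initial-$\tau$ requirement. Your treatment of those two operators is sound: the initial-$\tau$ clause is indeed vacuous for input, and for $\caseonly$ the key observation---lift only the \emph{first} $\tau$-step of the matching weak transition through \textsc{Case}, since all subsequent steps take place in the derivative, outside any $\caseonly$ context---is exactly the right one; the static clauses are unproblematic because both the $\caseonly$ process and its assertion-guarded branches have frames statically equivalent to $\unit$.

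The one genuine gap is in your group (i) verification of the initial-$\tau$ condition. The claimed dichotomy---every $\tau$-step of $C[P]\ve{\sigma}$ "either originates from a $\tau$-step of $P\ve{\sigma}$ or is a structural step of the context that is available on both sides symmetrically"---is false for parallel composition and replication. A $\tau$-step of $P\ve{\sigma} \parop R\ve{\sigma}$ may be a \textsc{Com}-step in which $P\ve{\sigma}$ contributes a visible input or output action, and a $\tau$-step of $!P\ve{\sigma}$ may be a communication between two unfolded copies of $P\ve{\sigma}$; neither is a $\tau$-step of $P\ve{\sigma}$, nor is it available to the other side by mere symmetry of the context. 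Matching such a step requires invoking weak simulation of the visible action (from $P\ve{\sigma} \wbisim Q\ve{\sigma}$), re-deriving the synchronisation via \textsc{Com}, and then observing that the resulting weak $\tau$-transition of $Q\ve{\sigma} \parop R\ve{\sigma}$ (resp.\ $!Q\ve{\sigma}$) does contain at least one $\tau$-step, namely the \textsc{Com} step itself, so the weak-congruence clause is still met. These communication cases are precisely where the work lies---compare the role of Lemma~\ref{lemma:provchaneq} in the parallel and replication cases of Theorem~\ref{thm:bisimpres}---so they cannot be absorbed into "structural steps of the context". With that case analysis repaired, your proof goes through and follows the same route as the paper's.
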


\begin{proof}
  Formally proven in Isabelle, using Theorem~\ref{thm:weak-bisim-cong} where applicable.
\end{proof}

\subsection{Motivating the design}\label{sec:design}

We have added provenance annotations to an operational semantics that had no shortage
of annotations and side-conditions to begin with. The end result may strike the reader
as somewhat unparsimonious. Previously, psi-calculi had one label component---the channel
subjects---for keeping track of connectivity. We now have two; do we really need both?
In this section, we will explore the consequences of removing either channel subjects or
provenances from the semantics.
The short answer is that while we \emph{can} do without either one, the end result is
not greater parsimony.

\subsubsection{Do we need provenances?}

The fact that bisimilarity is compositional yet ignores provenances
suggests that the semantics could be reformulated without provenance annotations on labels.
To achieve this, what is needed is a side-condition $S$ for the \textsc{Com} rule which,
given an input and an output with subjects $M,K$,
determines if the input transition could have been derived from prefix $K$, and vice versa:

\begin{mathpar}
\inferrule*[]
 {\framedtrans{\frass{Q} \ftimes \Psi}{P}{\bout{\ve{a}}{M}{N}}{P'} \\
  \framedtrans{\frass{P} \ftimes \Psi}{Q}{\inlabel{K}{N}}{Q'} \\
  S
  }
       {\framedtrans{\Psi}{P \pll Q}{\tau}{(\nu \ve{a})(P' \pll Q')}}
\end{mathpar}

\noindent But we already have such an $S$: the semantics \emph{with} provenances! So we can let
\[S = \framedprovtrans{\frass{Q} \ftimes \Psi}{P}{\outlabel{M}{(\nu \ve{a})N}}{(\nu\frnames{P};\ve{x})K}{P'} \wedge \framedprovtrans{\frass{P} \ftimes \Psi}{Q}{\inlabel{K}{N}}{(\nu\frnames{Q};\ve{y})M}{Q'}\]

\noindent Of course, this definition is not satisfactory: the provenances are still there, just swept under the
carpet. Worse, we significantly complicate the definitions by effectively introducing a stratified
semantics. Thus the interesting question is not whether such an $S$ exists (it does), but whether $S$
can be formulated in a way that is significantly simpler than the semantics with provenances.
The author believes the answer is negative: $S$ is a property about the roots of the
proof trees used to derive the transitions from $P$ and $Q$. The provenance records just enough
information about the proof trees to show that $M$ and $K$ are connected; with no provenances,
it is not clear how this information could be obtained without essentially reconstructing
the proof tree.

Another alternative is to use the proof tree itself as the transition label~\cite{boudol1988non,DBLP:conf/icalp/DeganoP92}. This makes the necessary information available, at the expense of making labels even more complicated.

\subsubsection{Do we need channel subjects?}

While we have chosen to test for channel connectivity in the \textsc{In} and
\textsc{Out} rules,
a semantics without channel subjects would defer the connectivity check until
the \textsc{Com} rule.%
\footnote{This is similar to a design first proposed by Magnus Johansson in an unpublished draft.
Johansson's design does not use provenances, but obtains a similar effect by including bound
subjects and bound assertions in labels. By partitioning provenance binders in two sequences,
we can recover frame binders from labels and thus found no need to include bound assertions.}
Let us call the former approach \emph{early connectivity}, and the latter \emph{late connectivity}.
The rules for late connectivity---eliding freshness conditions for readability, and
using $!$ and $?$ to distinguish outputs from inputs---would be:

\begin{mathpar}

\inferrule*[Left=\textsc{In-Late}]
    {\,}   
{\framedprovtrans{\Psi}{\inprefix{M}{\ve{y}}{N} \sdot
P}{?{N}\subst{\ve{y}}{\ve{L}}}{M}{P\subst{\ve{y}}{\ve{L}}}}

\inferrule*[left=\textsc{Out-Late}]
    {\,}
    {\framedprovtrans{\Psi}{\outprefix{M}{N} \sdot P}{!{N}}{M}{P}}

\inferrule*[left=\textsc{Com-Late}, right={$\ve{a} \freshin Q$}]
 {\Psi \ftimes \frass{P} \ftimes \frass{Q} \vdash K \chcon M \\
 \framedprovtrans{\frass{Q} \ftimes \Psi}{P}{!{(\nu \ve{a})N}}{(\nu\frnames{P};\ve{x})K}{P'} \\
  \framedprovtrans{\frass{P} \ftimes \Psi}{Q}{?{N}}{(\nu\frnames{Q};\ve{y})M}{Q'} 
  }
       {\framedprovtrans{\Psi}{P \pll Q}{\tau}{\bot}{(\nu \ve{a})(P' \pll Q')}}
    
\end{mathpar}

It is pleasant that this formulation allows the \textsc{In-Late} and \textsc{Out-Late} rules
to have no side-conditions. Save for the provenance bookkeeping, all questions
of connectivity are handled entirely in \textsc{Com-Late}, which results in a pleasing
separation of concerns. This may seem more parsimonious at first glance, but it introduces two issues
that makes the trade-off seem unfavourable: more complicated bisimulation and spurious transitions.

\begin{enumerate}

\item More complicated bisimulation. Consider a psi-calculus where channel connectivity is syntactic equality; that is, where
$\Psi \vdash M \chcon K$ holds iff $M=K$.
Fix $M,K$ such that $M\neq K$. Using bisimilarity as defined in Definition~\ref{def:strongbisim},
we can show that $\overline{M}.0 \bisim \overline{K}.0$: without subjects, these processes emit identical labels save for the provenance, which is ignored by bisimulation.
Hence bisimilarity fails to be preserved by the parallel operator: consider these processes in parallel with a process that can receive on $M$. Then $\overline{M}.0$ can communicate but $\overline{K}.0$ cannot.

The takeaway is that with late connectivity, a compositional notion of bisimulation needs to be
more careful with which provenance the mimicking transition may use. The intuition is that rather
than admitting any provenance, we admit provenances that are connected to the same channels.
We conjecture that the necessary adaptation is:

\begin{defi}[Late-connectivity bisimulation]\label{def:cfstrongbisim} A symmetric relation $\mathcal{R} \subseteq \assertions \times \processes \times \processes$ is a \emph{channel-free bisimulation} iff for every $(\Psi,P,Q) \in \mathcal{R}$
\begin{enumerate}
  \item $\Psi \otimes \frameof{P} \;\simeq\; \Psi \otimes \frameof{Q}$ (static equivalence)
  \item $\forall \Psi'. (\Psi\otimes\Psi',P,Q) \in \mathcal{R}$ (extension of arbitrary assertion)
  \item If $\framedprovtrans{\Psi}{P}{\alpha}{\pi}{P'}$ and $\bn{\alpha} \freshin \Psi, Q$, then
    \begin{enumerate}
      \item If $\alpha=\tau$ then there exists $Q'$ such that $\framedprovtrans{\Psi}{Q}{\tau}{\bot}{Q'}$ and $(\Psi,P',Q') \in \mathcal{R}$
      \item For all $M,K,N,\ve{x},\frnames{P},\frass{P},\frnames{Q},\frass{Q}$ such that $\alpha = ?N$ and $\pi = (\nu\frnames{P};\ve{x})K$ and $\frameof{P} = \framepair{\frnames{P}}{\frass{P}}$ and $\frameof{Q} = \framepair{\frnames{Q}}{\frass{Q}}$ and $\frnames{P},\frnames{Q} \freshin \Psi,P,Q,M,\ve{x}$ and $\ve{x}\freshin\Psi$ and $\Psi \otimes \Psi_P \vdash M \chcon K$, then there exists $\ve{y},K',Q'$ such that $\ve{y} \freshin \Psi$ and $\Psi \otimes \Psi_Q \vdash M \chcon K'$ and $\framedprovtrans{\Psi}{Q}{?N}{(\nu\frnames{Q};\ve{y})K'}{Q'}$ and $(\Psi,P',Q') \in \mathcal{R}$
      \item (A similar clause for output transitions)
    \end{enumerate}
\end{enumerate}
\end{defi}

In words, for every channel $M$ that the transition from $P$'s provenance is connected to, there
must be a transition from $Q$ with a provenance that is also connected to $M$. Note that static
equivalence is not sufficient to imply preservation of connectivity: the conditions
may be distinct, and even if equal may be obscured by the frame binders.

We find this definition of bisimulation intolerably ad-hoc and complicated.

\item Spurious transitions. Consider the representation of the $\pi$-calculus in psi-calculi, where $\terms = \nameset$, and where channel connectivity is syntactic equality on names. This representation is in one-to-one transition correspondence with the standard presentation of the $\pi$-calculus~\cite{Johansson10}, but if we use late connectivity, one-to-one transition correspondence is lost. The pi-calculus process $(\nu x)\overline{x}y$ should not have any outgoing transitions, but late connectivity semantics allows the derivation of a transition as follows:

\begin{mathpar}
\inferrule*[left=\textsc{Scope-Late}]
    {
    \inferrule*[left=\textsc{Out-Late}]
      {\,}
      {\framedprovtrans{\Psi}{\overline{x}y}{!y}{x}{0}}      
    }
    {\framedprovtrans{\Psi}{(\nu x)\overline{x}y}{!y}{(\nu x)x}{(\nu x)0}}
\end{mathpar}

\noindent The existence of this transition is more of a blemish than a real problem.
It cannot be used to derive a communication because there exists no $y$ such that
$x \neq y$ and $x \chcon y$. It will be ignored by late-connectivity bisimulation
(Definition~\ref{def:cfstrongbisim}) for the same reason, so it remains true
that $(\nu x)\overline{x}y \bisim 0$. Still, we maintain the view that a derivable
transition should signify the readiness of the process to engage in some behaviour.
This transition signifies nothing.

\end{enumerate}

%\begin{thm}[Algebraic laws of strong congruence]\label{thm:strong-struct}\ 
%\begin{mathpar}
% $\trisimsub{\Psi}{P}{P \parop \nil}$
%\and $\trisimsub{\Psi}{P\parop ( Q \parop R)}{(P \parop Q) \parop R}$
%\and $\trisimsub{\Psi}{P \parop Q}{Q \parop P}$
%\and $\trisimsub{\Psi}{(\nu a)\nil}{\nil}$
%\and $\trisimsub{\Psi}{P \parop (\nu a) Q}{(\nu a)(P \parop Q)}$ if $a \freshin P$
%\and $\trisimsub{\Psi}{\outprefix{M}{N}.(\nu a)P}{(\nu a)\outprefix{M}{N}.P}$ if $a \freshin M, N$
%\and $\trisimsub{\Psi}{\inprefix{M}{\vec{x}}{N}.(\nu a)P}{(\nu a)\inprefix{M}{\vec{x}}{N}.P}$ if $a \freshin \vec{x},M,N$
%\and $\trisimsub{\Psi}{\caseonly\;{\ci{\vec{\varphi}}{\vec{(\nu a)P}}}}{(\nu a)\caseonly\;{\ci{\vec{\varphi}}{\vec{P}}}}$ if $a \freshin \vec{\varphi}$ and $\vec{P}$ are assertion-guarded
%\and $\trisimsub{\Psi}{(\nu a)(\nu b)P}{(\nu b)(\nu a)P}$
%\and $\trisimsub{\Psi}{!P}{P \parop !P}$ if $P$ is assertion-guarded
%\end{mathpar}
%\end{thm}

\subsection{Revisiting the counterexamples}\label{sec:counterexamples}

In the introduction, we mentioned that Bengtson et al.~\cite{bengtson.johansson.ea:psi-calculi-long} have counterexamples to the effect that without symmetry and transitivity,
scope extension is unsound. In this section we will revisit these counterexamples, with the
aim of convincing the reader that they do not apply to our developments in the present
paper.

We begin by quoting the counterexample used by Bengtson et al. to argue that scope extension requires channel symmetry~\cite[p.~14]{bengtson.johansson.ea:psi-calculi-long}:

\begin{quote}
  Consider any psi-calculus where $\Psi_1$ and $\Psi_2$ are such that $\Psi_1 \ftimes \Psi_2
\vdash a \sch b$ and $\Psi_1 \ftimes \Psi_2 \vdash b \sch b$. We shall argue that also $\Psi_1 \ftimes \Psi_2 \vdash b \sch a$ must hold, otherwise scope extension does not hold. 
Consider the agent
\[(\nu a,b)(\pass{\Psi_1}\parop\pass{\Psi_2}\parop\overline{a}\sdot \nil\parop
b\sdot \nil)\]
which has an internal communication $\tau$ using $b$ as subjects in the premises of the {\sc Com} rule. 
If $b \freshin \Psi_1$ and $a \freshin \Psi_2$, by scope extension the agent should behave as
\[(\nu a)(\pass{\Psi_1}\parop\overline{a}\sdot \nil) \;\parop\; (\nu b)(\pass{\Psi_2}\parop
b\sdot \nil)\]
and therefore this agent must also have a $\tau$ action. The left hand component cannot do an $\overline{a}$ action, but in the environment of $\Psi_2$ it can do a $\overline{b}$ action.
Similarly, the right hand component cannot do a $b$ action. The only possibility is for it to do an $a$ action, as in
\[\framedtrans{\Psi_1}{(\nu
b)(\pass{\Psi_2} \parop b \sdot \nil)}{a}{\cdots}\]
and this requires $\Psi_1 \ftimes \Psi_2 \vdash b \sch a$.
\end{quote}

\noindent This counterexample is only valid because the authors were not careful about the orientation of channel equivalence judgements in the operational rules---understandably so, because they were designing a calculus with symmetric connectivity. To see this clearly, consider the preconditions to the rules for input and output in the original psi-calculi:

\begin{mathpar}
  \and % avoid overlap on the left
\inferrule*[Left=\textsc{In-Old}]
    {\Psi \vdash M \sch K }
    {\framedtrans{\Psi}{\inprefix{M}{\ve{y}}{N}.P}{\inlabel{K}{N}\lsubst{\ve{L}}{\ve{y}}}{P\lsubst{\ve{L}}{\ve{y}}}}

\inferrule*[left=\textsc{Out-Old}]
    {\Psi \vdash M \sch K }
    {\framedtrans{\Psi}{\outprefix{M}{N}.P}{\outlabel{K}{N}}{P}}
\end{mathpar}

\noindent Note the inconsistency that in \textsc{In-Old}, the input channel occurs on the LHS of $\sch$, whereas in \textsc{Out-Old} the output channel occurs on the LHS. This of course makes no difference when $\sch$ is symmetric, but for an asymmetric connectivity relation it is important to use it consistently, with the input channel always going on the same side of $\sch$. Simply reorienting the channel equivalence judgement in \textsc{In-Old} suffices to make this counterexample
inapplicable even to the original psi-calculi.
This should not be taken to mean that the original psi-calculi do not require symmetry: we only
mean to say that the reason for the symmetry requisite is not clear from this counterexample.

For transitivity, Bengtson et al. give the following counterexample~\cite[p.~14]{bengtson.johansson.ea:psi-calculi-long}:

\begin{quote}
Let $\one$ entail $a \sch a$ for all names $a$, and
let $\Psi$ be an assertion with support $\{a,b,c\}$ that additionally entails
the two conditions $a \sch b$ and $b \sch c$, but not $a \sch c$, and thus does
not satisfy transitivity of channel equivalence. If $\Psi$ entails no other
conditions then $(\nu b)\Psi \sequivalent \one$, and we expect $(\nu b)\pass\Psi$
to be interchangeable with $\pass\one$ in all contexts.
Consider the agent
\[\overline{a}\sdot \nil \parop c \sdot \nil \parop (\nu b)\pass{\Psi}\]
By scope extension it should behave precisely as
\[(\nu b)(\overline{a}\sdot \nil \parop c \sdot \nil \parop \pass\Psi)\]
This agent has a $\tau$-transition since  $\Psi$ enables an interaction between the components $\overline{a}\sdot \nil$ and $c \sdot \nil$.
But the agent
\[\overline{a}\sdot \nil \parop c \sdot \nil \parop \pass{\one}\]
has no such transition. The conclusion is that $(\nu b) \Psi$ must entail that the components can communicate, i.e.~that $a \sch c$, in other words $\Psi \vdash a \sch c$.
\end{quote}

\noindent The present author agrees with this reasoning, but reaches the exact opposite conclusion about
which process is at fault: the anomaly here is not that $\overline{a}\sdot \nil \parop c \sdot \nil \parop \pass{\one}$ cannot reduce, but that $(\nu b)(\overline{a}\sdot \nil \parop c \sdot \nil \parop \pass\Psi)$ can. If $a$ and $c$ are not channel equivalent, there should not be a derivable communication between the channels $a$ and $c$. The original psi-calculi nonetheless admit the derivation

\begin{mathpar}
\inferrule*[left=\textsc{Scope}]
 {\inferrule*[left=\textsc{Par}]
  {
   \inferrule*[left=\textsc{Com-Old}]
    {{\inferrule*[Left=\textsc{Out}]
      {\Psi \vdash a \sch b}
      {\framedtrans{\Psi}{\overline{a}\sdot \nil}{\overline{b}}{\cdots}}} \\
     {\inferrule*[Left=\textsc{In}]
      {\Psi \vdash c \sch c}
      {\framedtrans{\Psi}{c \sdot \nil}{\underline{c}}{\cdots}}} \\
     \Psi \vdash b \sch c
    }
    {\framedtrans{\Psi}{\overline{a}\sdot \nil \parop c \sdot \nil}{\tau}{\cdots}
    }
  }
  {\framedtrans{\one}{\overline{a}\sdot \nil \parop c \sdot \nil \parop \pass\Psi}{\tau}{\cdots}}
 }
 {\framedtrans{\one}{(\nu b)(\overline{a}\sdot \nil \parop c \sdot \nil \parop \pass\Psi)}{\tau}{\cdots}}
\end{mathpar}

Notice how we use three different channel equivalence judgements to string together
a derivation via $b$, which both $a$ and $c$ are connected to. This is not a problem if
transitivity is intended, but leads to absurd derivations if the intention is to allow
non-transitive connectivity relations.

With the provenance semantics, the counterexample does not apply since no communication between $a$ and $c$ is possible: the only possibility is to apply the \textsc{Com} rule with matching subjects and provenances. This requires $\overline{a}\sdot \nil$ to have an output transition with subject $c$ and $c\sdot \nil$ to have an input transition with subject $a$, but such transitions cannot be derived because $a \sch c$ (in our notation $a \chcon c$) does not hold.

Finally, we observe that a slight variant of the counterexample for transitivity illustrates the
need for symmetry.
This time, let $\Psi$ be an assertion with support $\{a,b,c,d\}$ that entails the three conditions $a \sch b$ and $d \sch b$ and $c \sch d$ and none other.
The agent
\[\overline{a}\sdot \nil \parop c \sdot \nil \parop \one\]
will have no transitions, but this agent will have a $\tau$ transition:
\[(\nu b,d)(\overline{a}\sdot \nil \parop c \sdot \nil \parop \pass\Psi)\]
We conclude by the same reasoning as above that $\Psi \vdash a \sch c$ must hold,
or in other words, that channel equivalence must satisfy the law
\[a \sch b \wedge d \sch b \wedge c \sch d \Rightarrow a \sch c\]
\noindent This awkward-looking algebraic law is weaker than symmetry and transitivity,
and together with reflexivity it implies both.
It may well be that this weaker law is sufficient for the original psi-calculi to be
compositional (assuming a channel equivalence reorientation in the \textsc{In-Old} rule).
However, we find it difficult to imagine a useful connectivity relation that satisfies this law
but is neither reflexive, transitive nor symmetric.

\subsection{Validation}\label{sec:validation}

We have defined semantics and bisimulation, and showed that bisimilarity satisfies the expected
laws. But how do we know that they are the right
semantics, and the right bisimilarity?
This section provides two answers to this question. First, we show that our developments
constitute a conservative extension of the original psi-calculi.
Second, we define a reduction semantics and barbed bisimulation that are
in agreement with our (labelled) semantics and (labelled) bisimilarity.

Let $\apitransarrow{}_o$ and $\bisim_o$ denote semantics and bisimilarity as defined
by Bengtson et al.~\cite{bengtson.johansson.ea:psi-calculi-long},
i.e., without provenances
and with the \textsc{Com-Old} rule discussed in Section~\ref{sec:definitions}.
Along the same lines, let $\wbisim_o$ and ${\underset{{\rm tau}}{\wbisim}}_o$ and $\wcong_o$
denote weak bisimulation, weak $\tau$-bisimilarity and weak congruence as defined by
Johansson et al.~\cite{DBLP:conf/lics/JohanssonBPV10}.
Then conservativity can be stated as follows:

\begin{thm}[Conservativity]\label{thm:conservativity}When $\chcon$ is symmetric and transitive we have
%      $\bisim_o\;=\;\bisim{}$
%  and $\apitransarrow{}_o\;=\;\apitransarrow{}$
%  and $\wbisim_o\;=\;\wbisim$
%  and ${\underset{{\rm tau}}{\wbisim}}_o\;=\;\underset{{\rm tau}}{\wbisim}$
%  and $\wcong_o\;=\;\wcong$

  \begin{mathpar}
        \bisim_o\;=\;\bisim{}
    \and \apitransarrow{}_o\;=\;\apitransarrow{}
    \and \wbisim_o\;=\;\wbisim
    \and \underset{{\rm tau}}{\wbisim}_o\;=\;\underset{{\rm tau}}{\wbisim}
    \and \wcong_o\;=\;\wcong
  \end{mathpar}
\end{thm}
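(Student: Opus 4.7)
The plan is to reduce all five equalities to the second one, $\apitransarrow{}_o \;=\; \apitransarrow{}$, where the right-hand side is read modulo the extra provenance component; i.e.~we prove $\framedtrans{\Psi}{P}{\alpha}{P'}$ iff $P$ makes the same labelled transition under the old semantics. Once this is established, the remaining four equalities are essentially immediate: each of $\bisim$, $\wbisim$, $\underset{\rm tau}{\wbisim}$, and $\wcong$ is defined purely in terms of $\framedtrans{\Psi}{P}{\alpha}{P'}$ (provenances having already been quantified away), and the defining clauses agree verbatim with those of Bengtson et al.~and Johansson et al. A routine mutual coinduction then transfers each relation across.

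The transition correspondence is proved by two inductions on derivations. For the forward direction ($\apitransarrow{} \Rightarrow \apitransarrow{}_o$) all rules but \textsc{Com} translate directly: the $\chcon$ premises of \textsc{In} and \textsc{Out} give $\sch$ premises at once (symmetry reorients the one in \textsc{In}), and the structural rules \textsc{Par}, \textsc{Scope}, \textsc{Open}, \textsc{Case} and \textsc{Rep} are unaffected by provenances. For \textsc{Com}, given new transitions of $P$ and $Q$ with output label subject $M$, input label subject $K$ and matching provenances, Lemma~\ref{lemma:provchaneq} (together with its output dual) yields $\Psi \otimes \frass{P} \otimes \frass{Q} \vdash M \chcon K$; under the symmetry assumption this is exactly the side condition required by \textsc{Com-Old}.

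The reverse direction is the substantive half. The \textsc{Com-Old} case is the main difficulty: given an old derivation in which $P$ outputs with label subject $M$ via some root output prefix $M_P$ (so $M_P \sch M$ appears at that leaf), $Q$ inputs with label subject $K$ via root input prefix $K_Q$ ($K_Q \sch K$), and $M \sch K$ as the side condition, we must \emph{rebuild} the transitions in the new semantics so that the output subject matches the input prefix and the input subject matches the output prefix. Concretely, one derives a new output from $P$ with label subject $K_Q$ and a new input from $Q$ with label subject $M_P$; both require the judgment $M_P \chcon K_Q$, which follows from chaining $M_P \sch M \sch K \sch K_Q$ by symmetry and transitivity. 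To make this argument commute with the structural rules, the induction hypothesis has to be strengthened so that a subderivation producing label subject $L$ can be replaced by one producing any $\sch$-equivalent $L'$; this closure is precisely transitivity.

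The principal obstacle is this strengthened hypothesis: one has to check that rerouting the label subject commutes with \textsc{Scope}, \textsc{Open} and \textsc{Par} without disturbing the freshness conditions on the provenance binders, and that the \textsc{Open} rule's requirement $b \in \names{N}$ and freshness of $M$ remain satisfied after the subject switch. Once that bookkeeping is in place, Step~1 is complete, and the four bisimilarity-level equalities follow from it by straightforward coinduction---which is why, despite the manipulations involved, the whole theorem was tractable to formalise in Nominal Isabelle.
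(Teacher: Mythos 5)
Your proposal matches the paper's proof essentially step for step: the bulk of the work is the provenance-erased transition correspondence $\apitransarrow{}_o = \apitransarrow{}$, where the new-to-old direction uses symmetry to reorient the \textsc{In} judgement and Lemma~\ref{lemma:provchaneq} (together with associativity of $\otimes$) to reconstruct the \textsc{Com-Old} side condition, and the old-to-new direction handles \textsc{Com-Old} by relabelling the transitions obtained from the induction hypotheses with the other component's provenance, justified by exactly your symmetry-and-transitivity chain. The remaining four equalities then follow as you say, because the definitions of strong and weak bisimilarity and the congruences are stated purely in terms of provenance-erased transitions.
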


\begin{proof}
  Formally proven in Isabelle.
  The bulk of the proof is to show that $\apitransarrow{}_o\;=\;\apitransarrow{}$.
  
  The $\Leftarrow$ direction is by induction on the derivation of the
  $\apitransarrow{}$ judgement, using symmetry to reorient the connectivity judgement
  in the \textsc{In} case. In the \textsc{Com} case, associativity and
  Lemma~\ref{lemma:provchaneq} are used to reconstruct the missing channel equivalence judgement

  The $\Rightarrow$ direction is by induction on the $\apitransarrow{}_o$ judgement, and is more
  involved; in particular, the \textsc{Com-Old} case requires relabelling the transitions
  obtained from the induction hypotheses with the provenance of the other.
\end{proof}

\noindent Our reduction semantics departs from standard designs~\cite{Berry:1989:CAM:96709.96717,Milner:1990:FP:90397.90426}
by relying on reduction contexts~\cite{DBLP:journals/tcs/FelleisenH92} instead of
structural rules, for two reasons. First, standard formulations tend to include rules like these:
\begin{mathpar}
\inferrule*[]
    {\trans{P}{}{P'}}   
{\trans{P \parop Q}{}{P' \parop Q}}
\and
\inferrule*[]
    {\ }
{\trans{\alpha.P + Q \parop \overline{\alpha}.R + S}{}{P \parop R}}
\end{mathpar}
\noindent A parallel rule like the above would be unsound because $Q$ might contain assertions that retract some conditions needed to derive $P$'s reduction.
The reduction axiom assumes prefix-guarded choice. We want our semantics to
apply to the full calculus, without limiting the syntax
to prefix-guarded $\caseonly$ statements.

But first, a few auxiliary definitions.
The \emph{reduction contexts} are the contexts in which communicating
processes may occur:

\begin{defi}[Reduction contexts]
The \emph{reduction contexts}%$\contexts$
, ranged over by $C$, are generated by the grammar
\[
\begin{array}{rrll}
 C & := & P_G & \mbox{(process)} \\
  &  & \chole & \mbox{(hole)} \\
  &  & C \parop C & \mbox{(parallel)} \\
  &  & \caseonly\;{\ci{\ve{\varphi}}{\ve{P_G}}}\casesep{\ci{\varphi'}{C}}\casesep{\ci{\ve{\varphi''}}{\ve{Q_G}}} & \mbox{(case)} \\ 
\end{array}
\]
Let $\holes{C}$ denote the number of holes in $C$. $\cfill{C}{\ve{P_G}}$ denotes the process that results from filling each hole of $C$ with the corresponding element of $\ve{P_G}$, where holes are numbered from left to right; if $\holes{C} \neq |\ve{P_G}|$, $\cfill{C}{\ve{P_G}}$ is undefined.
\end{defi}

We do not need restriction contexts---instead, we rely on structural rules to
pull all restrictions to the top level.
To this end, we let \emph{structural congruence} $\equiv$ be the smallest equivalence relation on processes derivable using Theorems~\ref{thm:bisimpres} and~\ref{thm:strong-struct}.
The \emph{conditions} $\conds{C}$ and \emph{parallel processes} $\ppr{C}$ of a context $C$
are, respectively, the conditions in $C$ that guard the holes, and the processes of $C$ that
are parallel to the holes:

\[
\begin{array}{rrl}
 \ppr{P_G} & = & P_G \\
 \ppr{\chole} &  = & \nil \\
 \ppr{C_1 \parop C_2} &  = & \ppr{C_1} \parop \ppr{C_2} \\
 \ppr{\caseonly\;{\ci{\ve{\varphi}}{\ve{P_G}}}\casesep{\ci{\varphi'}{C}}\casesep{\ci{\ve{\varphi''}}{\ve{Q_G}}}} &  = & \ppr{C} \\
 & & \\
 \conds{P_G} &  = & \emptyset \\
 \conds{\chole} &  = & \emptyset \\
 \conds{C_1 \parop C_2} &  = & \conds{C_1} \cup \conds{C_2} \\
 \conds{\caseonly\;{\ci{\ve{\varphi}}{\ve{P_G}}}\casesep{\ci{\varphi'}{C}}\casesep{\ci{\ve{\varphi''}}{\ve{Q_G}}}} &  = & \{\varphi'\} \cup \conds{C}
\end{array}
\]
%\begin{mathpar}
% \ppr{P_G}  =  P_G \and
% \ppr{\chole}  =  \nil \and
% \ppr{C_1 \parop C_2}  =  \ppr{C_1} \parop \ppr{C_2} \and
% \ppr{\caseonly\;{\ci{\ve{\varphi}}{\ve{P_G}}}\casesep{\ci{\varphi'}{C}}\casesep{\ci{\ve{\varphi''}}{\ve{Q_G}}}}  =  \ppr{C} \and
% \conds{P_G}  =  \emptyset \and
% \conds{\chole}  =  \emptyset \and
% \conds{C_1 \parop C_2}  =  \conds{C_1} \cup \conds{C_2} \and
% \conds{\caseonly\;{\ci{\ve{\varphi}}{\ve{P_G}}}\casesep{\ci{\varphi'}{C}}\casesep{\ci{\ve{\varphi''}}{\ve{Q_G}}}}  =  \{\varphi'\} \cup \conds{C}
%\end{mathpar}

\begin{defi}[Reduction semantics]
The reduction relation $\mathord{\apitransarrow{}}\subseteq\processes \times \processes$ is
defined inductively by the rules of Table~\ref{table:reduction-semantics}.
\end{defi}
  
\begin{table}[tb]
\begin{mathpar}
\inferrule*[left=\textsc{Struct}]
    {P \equiv Q \\ \trans{Q}{}{Q'} \\ Q' \equiv P'}
    {\trans{P}{}{P'}}

\inferrule*[left=\textsc{Scope}]
    {\trans{P}{}{Q}}
    {\trans{(\nu a)P}{}{(\nu a)Q}}

\inferrule*[left=\textsc{Ctxt}]
    {\ve{\Psi} \vdash M \chcon N \\
      K = L\subst{\ve x}{\ve T} \\
      \forall \varphi \in \conds{C}.\;\ve{\Psi} \vdash \varphi
     }
    {\trans{\ve{\pass{\Psi}} \parop \cfill{C}{\outprefix{M}{K}.P,\;\inprefix{N}{\ve x}{L}.Q}}{}{\ve{\pass{\Psi}} \parop P \parop Q\subst{\ve x}{\ve T} \parop \ppr{C}}}
\end{mathpar}
\caption{\rm Reduction semantics.
  Here $\ve{\Psi}$ abbreviates the composition $\Psi_1 \otimes \Psi_2 \otimes \dots$, and
  $\ve{\pass{\Psi}}$ abbreviates the parallel composition $\pass{\Psi_1} \parop \pass{\Psi_2} \parop \dots$---for empty sequences they are taken to be $\one$ and $\nil$ respectively.
}
\label{table:reduction-semantics}
\end{table}

In words, \textsc{Ctxt} states that if an input and output prefix occur
in a reduction context, we may derive a reduction if the following holds:
the prefixes are connected in the current assertion environment,
the message matches the input pattern, and all conditions guarding
the prefixes are entailed by the environment.
The $\ppr{C}$ in the reduct makes sure any processes in parallel to the holes
are preserved.

Note that even though an unrestricted parallel rule would be unsound in a psi-calculus with
non-monotonic composition, the following is valid as a derived rule:
\[
\inferrule*[]
    {\trans{P}{}{P'}}   
{\trans{P \parop Q_G}{}{P' \parop Q_G}}
\]

\begin{thm}\label{lemma:harmony}
  $\trans{P}{}{P'}$ iff there is $P''$ such that $\framedtrans{\unit}{P}{\tau}{P''}$ and $P'' \equiv P'$
\end{thm}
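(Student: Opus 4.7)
The plan is to prove each direction separately by rule induction.

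For the forward direction ($\Rightarrow$), I would induct on the derivation of $\trans{P}{}{P'}$. The \textsc{Struct} case follows because $\equiv$ is contained in $\bisim$ (it is generated by the algebraic laws of Theorems~\ref{thm:bisimpres} and~\ref{thm:strong-struct}), so from the inductive $\tau$-transition of the congruent process I obtain a matching $\tau$-transition from $P$ whose target is bisimilar (hence, after using structural congruence laws to pull all binders to the top, provably $\equiv$) to $P'$. The \textsc{Scope} case just applies the labelled \textsc{Scope} rule. The real work is the \textsc{Ctxt} case: given a context $C$ with two holes filled by an output and an input prefix, I would build a $\tau$-derivation by recursion on the structure of $C$. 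At each $\cpar$ node I use \textsc{ParL}/\textsc{ParR} to thread the input and output transitions past the sibling subcontext, accumulating its frame into the provenance binders exactly as the semantics dictates; at each \caseonly{} node I use \textsc{Case}, discharging the guard $\varphi'$ using the hypothesis $\forall\varphi\in\conds{C}.\,\ve\Psi\vdash\varphi$; at the \chole{} I invoke \textsc{In}/\textsc{Out}, using $\ve\Psi\vdash M\chcon N$ together with scope extension to ensure the assertion environment seen by the prefixes is $\ve\Psi$. Finally I assemble the two derivations with \textsc{Com} and use structural congruence to line up $\ppr{C}$ with the residual processes in the context and pull $\ve{\pass{\Psi}}$ back to the top.

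For the backward direction ($\Leftarrow$), I would induct on the derivation of $\framedtrans{\unit}{P}{\tau}{P''}$. A $\tau$ label can arise only from \textsc{Com}, or be propagated by \textsc{ParL}, \textsc{ParR}, \textsc{Scope}, \textsc{Case}, or \textsc{Rep}. The \textsc{Com} case is the base: we have $P = P_1\parop P_2$ with $P_1$ (resp.~$P_2$) deriving an output (input) transition with matching provenance/subject pairs. By inspecting these derivations, each prefix occurs inside some context of parallel, case, restriction, and replication constructs; using structural congruence ($!P\equiv P\parop !P$, scope extension, commutativity/associativity of $\parop$, and the case/restriction law) I rewrite $P$ into the form $(\nu\ve a)(\ve{\pass\Psi}\parop\cfill{C}{\outprefix{M}{K}.R,\,\inprefix{N}{\ve x}{L}.S})$ with $\conds{C}$ and the connectivity judgement entailed by $\ve{\Psi}$ (using Lemma~\ref{lemma:provchaneq} to recover a suitable connected pair of subjects from the two provenances). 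Then \textsc{Ctxt} followed by \textsc{Scope} and \textsc{Struct} delivers the required reduction. The propagation cases are handled by the inductive hypothesis combined with a structural-congruence rewriting: \textsc{ParR} is absorbed by commutativity of $\parop$ and the derived parallel reduction rule noted in the paper; \textsc{Scope} uses the reduction \textsc{Scope} rule; \textsc{Case} and \textsc{Rep} are absorbed by using \textsc{Struct} to replace the case expression or replication with the unfolded process that the premise actually derives its transition from.

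The main obstacle I anticipate is the bookkeeping in the \textsc{Com} case of the backward direction. The labelled \textsc{Com} rule threads subframes through the environment while in the reduction semantics all assertions must appear as explicit parallel components $\ve{\pass\Psi}$. Reconciling these viewpoints requires careful structural-congruence rewriting to hoist every assertion out of restrictions, replications, and guarded positions so that the connectivity and case-guard hypotheses of \textsc{Ctxt} are literally the ones that the labelled derivation used. The fact that the candidate context $C$ is guaranteed to exist with the prefixes where we need them is ultimately justified by iterating this hoisting and by Lemma~\ref{lemma:provchaneq}, which guarantees that the subjects witnessing channel connectivity can be chosen fresh enough to survive scope extension.
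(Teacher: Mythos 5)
Your overall decomposition is sensible, and your forward direction (induction on the derivation of $\trans{P}{}{P'}$, with the real work in the \textsc{Ctxt} case) matches the technique the paper uses for that implication. However, two of your steps fail as written. First, in the \textsc{Struct} case you argue that since $\equiv\,\subseteq\,\bisim$, the matching transition from $P$ has a target that is bisimilar, ``hence, after using structural congruence laws to pull all binders to the top, provably $\equiv$'', to $P'$. That upgrade is invalid: $\bisim$ is strictly coarser than $\equiv$ (e.g.\ in the pi-calculus instance $(\nu x)\outprefix{x}{y}.\nil \bisim \nil$, yet no law generated by Theorems~\ref{thm:bisimpres} and~\ref{thm:strong-struct} relates them), and no rearrangement of binders turns a bisimilarity into a structural congruence. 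What this case actually needs is a separate lemma that $\equiv$ is a bisimulation in the strong \emph{syntactic} sense: if $P \equiv Q$ and $\framedtrans{\unit}{Q}{\tau}{Q''}$ then $\framedtrans{\unit}{P}{\tau}{P''}$ for some $P'' \equiv Q''$. This must be proved by its own induction on the derivation of $P \equiv Q$, checking each algebraic law and congruence rule; it does not follow from $\equiv\,\subseteq\,\bisim$.

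Second, your backward direction inducts on the derivation of $\framedtrans{\unit}{P}{\tau}{P''}$, but the statement being proved is too weak to serve as an induction hypothesis. In the \textsc{ParL}/\textsc{ParR} cases the premise transition lives in the environment $\frass{Q} \otimes \unit$, which is not (even statically equivalent to) $\unit$ whenever the sibling $Q$ has top-level assertions; your hypothesis, stated only at $\unit$, does not apply, and the derived parallel rule you invoke requires the sibling to be assertion-guarded, so it covers exactly the easy case and not the problematic one. You would need to generalise over an arbitrary environment, e.g.\ prove that $\framedtrans{\Psi}{P}{\tau}{P''}$ yields a reduction of $\pass{\Psi} \parop P$. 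Moreover, the \textsc{Case} case cannot be ``absorbed by \textsc{Struct}'': the relation $\equiv$ contains no law equating $\caseonly\;\ci{\ve{\varphi}}{\ve{P}}$ with a branch $P_i$, even when $\varphi_i$ is entailed---only \textsc{Rep} can be absorbed that way, via $!P \equiv P \parop !P$. A case operator must instead be pushed \emph{into} the reduction context, with $\varphi_i$ recorded in $\conds{C}$ and discharged by \textsc{Ctxt}; for that, the induction hypothesis has to deliver a decomposition $P \equiv (\nu\ve{a})(\ve{\pass{\Psi}} \parop \cfill{C}{\outprefix{M}{K}.R,\,\inprefix{N}{\ve{x}}{L}.S})$ rather than a bare reduction. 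These repairs, taken together, essentially reconstruct the paper's actual strategy for this direction, which is not an induction on the labelled derivation but a reduction-to-normal-form argument: every $P$ satisfies $P \equiv (\nu \ve{x})(\ve{\pass{\Psi}} \parop P_G)$ for some $\ve{x},\Psi,P_G$, and the communication redex is then located inside the guarded part $P_G$.
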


\begin{proof}
  A full proof is available in the technical report~\cite{pohjola:newpsireport}.
  The $\Leftarrow$ direction is by induction on the derivation of $\trans{P}{}{P'}$.
  The $\Rightarrow$ direction is via reduction to normal form, showing that for every
  process $P$ there are $\ve{x},\Psi,P_G$ such that
  \[P \equiv (\nu \ve{x})(\ve{\pass{\Psi}} \parop P_G)\tag*{\qedhere}\]
\end{proof}

For barbed bisimulation, we need to define what the observables are, and what contexts
an observer may use.
We follow previous work by Johansson et al.~\cite{DBLP:conf/lics/JohanssonBPV10}
on weak barbed bisimilarity for the original psi-calculi on both counts.
First, we take the barbs to be the output labels a process can exhibit: we define
$P\exposes{\outlabel{M}{(\nu \ve{a})N}}$
($P$ exposes $\outlabel{M}{(\nu \ve{a})N}$)
to mean
$\exists P'.\;\framedtrans{1}{P}{\outlabel{M}{(\nu \ve{a})N}}{P'}$.
We write $P \exposes{\overline{M}}$ for
$\exists \ve{a},N. P \exposes{\outlabel{M}{(\nu \ve{a})N}}$,
and $P\wexposes{\alpha}$ for $P \goesto{\tau}^\ast\exposes{\alpha}$.
Second, we let observers use \emph{static} contexts, i.e.~ones built from parallel
and restriction.

\begin{defi}[Barbed bisimilarity]\label{def:barbbisim} \emph{Barbed bisimilarity}, written $\barbbisim$, is the largest equivalence on processes such that $P \barbbisim Q$ implies
\begin{enumerate}
  \item If $P\exposes{\outlabel{M}{(\nu \ve{a})N}}$
    and $\ve{a} \freshin Q$ then $Q\exposes{\outlabel{M}{(\nu \ve{a})N}}$ (barb similarity)
  \item If $\trans{P}{}{P'}$ then there exists $Q'$ such that $\trans{Q}{}{Q'}$ and $P' \barbbisim Q'$ (reduction simulation)
  \item $(\nu\ve{a})(P \parop R) \barbbisim (\nu\ve{a})(Q \parop R)$ (closure under static contexts)
\end{enumerate}
\end{defi}

Our proof that barbed and labelled bisimilarity coincides only
considers psi-calculi with a certain minimum of sanity and expressiveness.
This rules out some degenerate cases:
psi-calculi where there are messages that can be sent but not received,%
\footnote{
  More precisely, we refer to messages that can occur as objects of output transitions,
  but not as the objects of input transitions.
  This can happen in psi-calculi where the substitution function on terms
  is not surjective, because
  the rule \textsc{In} requires the message to be the result of a substitution.},
and psi-calculi where no transitions whatsoever are possible.

\begin{defi} A psi-calculus is \emph{observational} if:
\begin{enumerate}
  \item For all $P$ there are $M_P,K_P$ such that
    $\frameof{P} \vdash M_P \chcon K_P$ and
    not $P\wexposes{\overline{K_p}}$.
% \item For all $\ve{x},N$ such that $\ve{x} \subseteq \names{N}$,
%    there exists $L,\ve{T}$ such that $N = L\subst{\ve{x}}{\ve{T}}$.
 \item If $N = (\ve{x}\;\ve{y})\cdot M$ and $\ve{y} \freshin M$ and
   $\ve{x},\ve{y}$ are distinct
   then $M\subst{\ve{x}}{\ve{y}} = N$.
\end{enumerate}
\end{defi}

\noindent The first clause means that no process can exhaust the set of barbs.
Hence observing contexts can signal success or failure without interference
from the process under observation.
For example, in the pi-calculus $M_P,K_P$ can be any name $x$ such that $x\freshin P$.
%
%The second clause means that for every message there is an input pattern that matches it.
%
The second clause states that for swapping of distinct names,
substitution and permutation have the same behaviour.
Any standard definition of simultaneous substitution should satisfy this requirement.
These assumptions are present, explicitly or implicitly,
in the work of Johansson et al.~\cite{DBLP:conf/lics/JohanssonBPV10}.
Ours are given a slightly weaker formulation.

We can now state the main result of this section:

%our Clause~1 is weaker, and we did not need
%a fourth clause about the existence of special conditions to compare frames.

\begin{thm} In all observational psi-calculi, $P \barbbisim Q$ iff $\trisimsub{\one}{P}{Q}$.
\end{thm}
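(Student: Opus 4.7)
The proof splits into two directions. For the $(\Leftarrow)$ direction, the plan is to show that $\bisim$ itself satisfies the three clauses of Definition~\ref{def:barbbisim}. Barb similarity follows from the simulation clause of Definition~\ref{def:strongbisim}: an exposed barb $\overline{M}$ arises from an output transition, which must be matched in $Q$ up to $\bisim$, and hence exposed. Reduction simulation combines Theorem~\ref{lemma:harmony}---which identifies reductions with $\tau$-transitions up to structural congruence---with the $\tau$-case of the labelled simulation clause, noting that $\equiv$ is contained in $\bisim$ by Theorems~\ref{thm:bisimpres} and~\ref{thm:strong-struct}. Closure under static contexts is immediate from the congruence of $\bisim$ with respect to parallel composition and restriction established in Theorem~\ref{thm:bisimpres}.

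For the $(\Rightarrow)$ direction, the standard approach is to construct a candidate relation built from $\barbbisim$ and show it is a strong bisimulation. A workable choice is
\[\mathcal{R} = \{(\Psi, P, Q) : (\nu \ve{b})(\pass{\Psi} \parop P) \barbbisim (\nu \ve{b})(\pass{\Psi} \parop Q), \ \ve{b} \freshin \Psi\},\]
which internalises the assertion environment into a static context that $\barbbisim$ is already closed under. Showing $\mathcal{R}$ satisfies the clauses of Definition~\ref{def:strongbisim} requires, for each labelled transition from $P$, constructing an observing context whose reduction behaviour distinguishes whether $Q$ can match it. For instance, to simulate an input $\framedtrans{\Psi}{P}{\inlabel{M}{N}}{P'}$, we compose $P$ with a tester of the form $T = \outprefix{M'}{N}\sdot \overline{K_P}$ where $M'$ is connected to $M$ and $K_P$ is a fresh success barb guaranteed to exist by observational clause~1. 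The induced reduction exposes $\overline{K_P}$, which $\barbbisim$ forces $Q$ to also expose, forcing $Q$ to have performed a matching input transition; the continuation $P'$ is then related to the corresponding $Q'$ again through $\mathcal{R}$. Static equivalence is extracted by using conditions as barbs, via test contexts of the shape $\caseonly\;{\ci{\varphi}{\overline{K_P}}}$, and arbitrary-assertion extension is handled by the flexibility in the choice of $\ve{b}$ and the ability to enlarge the context with additional $\pass{\Psi'}$ components.

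The main obstacle will be designing these observing contexts so that they faithfully detect labelled transitions without introducing spurious matches. Output transitions with scope-extruded names $\bout{\ve{a}}{M}{N}$ are the most delicate: the tester must probe not only the subject but also the shape of the object, typically by following the input with a second prefix that uses the received names as channels or messages. Observational clause~2 (that substitution agrees with name-swapping on distinct names) is essential here, as it lets us identify the fresh names chosen by the tester with the extruded names emitted by $P$, so that matching objects at the barbed level translates back to matching objects at the labelled level. Finally, one verifies that $\mathcal{R}$ is symmetric and closed under parallel extension, which follows from the corresponding closure of $\barbbisim$; the whole argument mirrors the classical labelled-vs-barbed correspondence, adapted to accommodate provenance annotations---which, reassuringly, can be ignored throughout since neither side's bisimilarity mentions them.
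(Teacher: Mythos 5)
Your overall plan matches the paper's proof: soundness ($\Leftarrow$) is established by checking that $\bisim_{\one}$ itself satisfies the clauses of barbed bisimilarity, using Theorem~\ref{lemma:harmony}, the congruence properties of Theorem~\ref{thm:bisimpres}, and the fact that ${\equiv}$ is contained in ${\bisim}$; completeness ($\Rightarrow$) is established by proving that a relation internalising the assertion environment is a strong bisimulation --- the paper uses $\{(\Psi,P,Q) : P \parop \pass{\Psi} \barbbisim Q \parop \pass{\Psi}\}$, without your extra restriction binders --- with tester contexts and the two observational clauses doing the work in the simulation and static-equivalence cases, just as you describe.

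There is, however, one concrete error in your tester construction, and it sits exactly at the point where this paper's semantics differs from its predecessors. To detect an input transition $\framedtrans{\Psi}{P}{\inlabel{M}{N}}{P'}$ you propose the tester $T = \outprefix{M'}{N}.\overline{K_P}$ ``where $M'$ is connected to $M$''. Under this paper's \textsc{Com} rule there is no connectivity side-condition relating the two parties: the rule fires only when the input's label subject is \emph{syntactically} the output's provenance, and vice versa. For a prefix-guarded tester, the provenance of its output transition is its own prefix $M'$, so $P \parop T$ has a $\tau$-transition only if $M' = M$; a tester whose prefix is merely ``connected to'' $M$ detects nothing. (Worse, since $\chcon$ need not be reflexive, the requirement that $M'$ be connected to $M$ may be unsatisfiable.) The correct choice is $M' := M$ outright, with no connectivity assumption on the tester's side: Lemma~\ref{lemma:provchaneq} gives $\Psi \otimes \frass{P} \vdash M \chcon K$ for $P$'s provenance $K$, which is exactly what lets the tester's \textsc{Out} rule produce label subject $K$ and then lets \textsc{Com} fire. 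The analogous correction applies to your output testers. With that repair, the remainder of your argument (success barbs from observational clause~1, name-swapping from clause~2, case-guarded barbs for static equivalence) goes through along the same lines as the paper's technical-report proof.
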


\begin{proof}
  A full proof is available in the technical report~\cite{pohjola:newpsireport}.
  Soundness is by coinduction on the definition of barbed bisimilarity,
  using $\bisim_{\one}$ as candidate relation,
  and using Theorems~\ref{lemma:harmony} and \ref{thm:bisimpres}, and the
  fact that $\equiv$ is a strong bisimulation.
  Completeness is by showing that $\{(\Psi,P,Q) : P \parop \pass{\Psi} \barbbisim Q \parop \pass{\Psi}\}$ is a bisimulation relation.
\end{proof}

%\begin{proof}
%  Vaporware (TODO)
%\end{proof}
%TODO: cite Ramunas

\section{Expressiveness}

In this section, we study two examples of
the expressiveness gained by dropping symmetry and transitivity.
%We use non-symmetric channel connectivity to encode a calculus with preorders,
%and a non-transitive one to encode mixed choice.

\subsection{Pi-calculus with preorders}\label{sec:prepi}

Recall that pi-F~\cite{gardner.wischik:explicit-fusions}
extends the pi-calculus with name equalities $(x=y)$ as first-class processes.
Communication in pi-F gives rise to equalities rather than substitutions,
so e.g.~$xy.P \parop \overline{x}z.Q$ reduces to $y = z \parop P \parop Q$:
the input and output objects are fused.
Hirschkoff et al.~\cite{DBLP:conf/lics/HirschkoffMS13} observe that fusion and
subtyping are fundamentally incompatible, and
propose a generalisation of pi-F
called the \emph{pi-calculus with preorders} or ${\pi}\!P$
to resolve the issue.

We are interested in ${\pi}\!P$ because its channel connectivity
is not transitive.
The equalities of pi-F are replaced with \emph{arcs} $a/b$ (``$a$ is above $b$'')
which act as one-way fusions: anything that can be done with $b$ can be done with $a$,
but not the other way around.
The effect of a communication is to create an arc
with the output subject above the input subject,
so $x(y).P \parop \overline{x}(z).Q$ reduces to $(\nu yz)(z/y \parop P \parop Q)$.
We write $A \vdash x \prec y$ to mean that $x$ and $y$ are related by the reflexive and transitive closure of the set of arcs $A$. $A$ is usually the set
of top-level arcs in the process under consideration, and will often be left implicit.
Two names $x,y$ are considered \emph{joinable} for the purposes of synchronisation if some name
$z$ is above both of them: formally, we write $x \curlyvee y$ for
$\exists z. x \prec z \wedge y \prec z$.

Hirschkoff et al.~conclude by saying that ``[it] could also be interesting to study the
representation of $\pi\!P$ into Psi-calculi.
This may not be immediate because the latter make use of on an equivalence relation on channels,
while the former uses a preorder''~\cite[p.~387]{DBLP:conf/lics/HirschkoffMS13}.
Having lifted the constraint that channels form an equivalence relation,
we happily accept the challenge.
We write ${\Psi}\!P$ for the psi-calculus we use to embed ${\pi}\!P$.
We follow the presentation of ${\pi}\!P$ from~\cite{DBLP:journals/jlp/HirschkoffMX15,DBLP:conf/fsen/HirschkoffMX15},
where the behavioural theory is most developed.

\begin{defi}
The psi-calculus ${\Psi}\!P$ is defined with the following parameters:
\[
\begin{array}{rrl}

  \terms & \defn & \nameset
  \\
  \conditions & \defn &
  \{x \prec y : x,y \in \nameset \}
  \cup
  \{x \curlyvee y : x,y \in \nameset \}

  \\
  \assertions & \defn & \powerfin{\{x \prec y : x,y \in \nameset \}}

  \\
  \unit & \defn & \{\}

  \\

  \otimes & \defn & \cup

  \\

  \chcon & \defn & \curlyvee

  \\

  \vdash & \defn & \mbox{the relation denoted $\vdash$ in~\cite{DBLP:conf/fsen/HirschkoffMX15}}.
  
\end{array}
\]
%\begin{mathpar}
%
%  \terms \defn \nameset
%  \and
%  \conditions \defn
%  \{x \prec y : x,y \in \nameset \}
%  \cup
%  \{x \curlyvee y : x,y \in \nameset \}
%
%  \and
%  \assertions \defn \powerfin{\{x \prec y : x,y \in \nameset \}}
%
%  \and
%  \unit \defn \{\}
%
%  \and
%
%  \otimes \defn \cup
%
%  \and
%
%  \chcon \defn \curlyvee
%
%  \and
%
%  \vdash \defn \mbox{the relation denoted $\vdash$ in~\cite{DBLP:conf/fsen/HirschkoffMX15}}.
%  
%\end{mathpar}
\end{defi}

\noindent The prefix operators of ${\pi}\!P$ are different from those of psi-calculi:
objects are always bound, communication gives rise to an arc rather than a
substitution, and a conditional silent prefix $[\varphi]\tau.P$ is included.
The full syntax, ignoring protected prefixes, is as follows:%
\footnote{We ignore protected prefixes because they are redundant, cf.~Remark~1 of \cite{DBLP:journals/jlp/HirschkoffMX15}.}

\begin{defi}[Syntax of ${\pi}\!P$]$\,$
  
  \begin{center}
  \begin{tabular}{rcll}
    $P$ & := & $a/b$ & (arc) \\
      &  & $\Sigma_{i\in I}\pi_i.P_i$ & (prefix-guarded choice) \\
      &  & $P \parop Q$ & (parallel) \\
      &  & $(\nu x)P$ & (restriction) \\
    $\pi$  & :=  & $a(x)$ & (input) \\
      &  & $\overline{a}{(y)}$ & (output) \\
      &  & $[\varphi]\tau$ & (conditional silent prefix) \\
    $\varphi$  & :=  & $x \prec y$ & \\
      &  & $x \curlyvee y$ &
  \end{tabular}
  \end{center}

Here $I$ is a finite index set.
\end{defi}

We can now define our encoding of ${\pi}\!P$ prefixes:

\begin{defi}[Encoding of prefixes]\label{def:prefixenc}
The encoding $\semb{\_}$ from ${\pi}\!P$ to ${\Psi}\!P$ is homomorphic on all
operators except prefixes and arcs, where it is defined by
\[
\begin{array}{rrll}
  \semb{a/b} & = & \pass{b \prec a} &
  \\
  \semb{\overline{a}{(y)}.P} & = & (\nu xy)(\overline{a}{x}.(\pass{x \prec y} \parop \semb{P}) &\mbox{ where $x\freshin y,P$}
  \\
  \semb{{a}{(y)}.P} & = & (\nu y)(\inprefix{a}{x}{x}.(\pass{y \prec x} \parop \semb{P})) &\mbox{ where $x\freshin y,P$}
  \\
  \semb{[\varphi]\tau.P} & = & \caseonly\;{\ci{\varphi}{(\nu x)(\inprefix{x}{x}{x}.0 \parop \overline{x}{x}.\semb{P})}} & \mbox{ where $x\freshin P$}
\end{array}
\]

For choice, we let $\semb{\Sigma_{i\in I}P_i} = {\caseonly\;{\ci{\ve{\varphi}}{\ve{\semb{P}}}}}$, where each $\varphi_i$ is a condition that is always entailed.\footnote{
  Such a condition can either be added to the target language, or we can use e.g.~%
  $a \prec a$ at the cost of some technical inconvenience.
  See the technical report for details.
  }
% \begin{mathpar}
%   \semb{a/b} = \pass{b \prec a}
%   \and
%   \semb{\overline{a}{(y)}.P} = (\nu xy)(\overline{a}{x}.(\pass{x \prec y} \parop \semb{P})\mbox{ where $x\freshin y,P$}
%   \and
%   \semb{{a}{(y)}.P} = (\nu y)(\inprefix{a}{x}{x}.(\pass{y \prec x} \parop \semb{P}))\mbox{ where $x\freshin y,P$}
%   \and
%   \semb{[\varphi]\tau.P} = \caseonly\;{\ci{\varphi}{(\nu x)(\inprefix{x}{x}{x}.0 \parop \overline{x}{x}.\semb{P})}}\mbox{ where $x\freshin P$}
% \end{mathpar}
\end{defi}

\noindent This embedding of ${\pi}\!P$ in psi-calculi comes with a notion of bisimilarity per
Definition~\ref{def:strongbisim}.
We show that it coincides with the labelled bisimilarity for ${\pi}\!P$ (written $\sim$)
introduced in~\cite{DBLP:journals/jlp/HirschkoffMX15,DBLP:conf/fsen/HirschkoffMX15}.

\begin{thm}\label{thm:madiotbisim}
  $P \sim Q$ iff $\semb{P} \bisim \semb{Q}$
\end{thm}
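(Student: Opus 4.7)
My plan is to prove both directions via a tight operational correspondence between $\pi\!P$ transitions and ${\Psi}\!P$ transitions on encoded processes. The first step is a characterisation lemma: $\semb{P}$ can perform a transition if and only if $P$ can perform a corresponding one in $\pi\!P$, modulo the fresh bound names and assertions that the encoding introduces around each prefix. For the conditional silent prefix, the encoding's internal handshake on a freshly bound name realises the $\textsc{Com}$ rule and corresponds to a single $\tau$ step in $\pi\!P$ guarded by $\varphi$; for the input/output encoding, the bound $x$ acts as a private "handle" whose assertion $\pass{x \prec y}$ becomes visible only after the prefix has fired. I would prove this lemma by induction on the process syntax, together with a subsidiary lemma showing that the frame of $\semb{P}$ entails exactly the arc judgements entailed by the top-level arcs of $P$.

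Second, I would establish completeness ($P \sim Q \Rightarrow \semb{P} \bisim \semb{Q}$) by exhibiting a candidate relation closed under the bookkeeping the encoding requires. The naive set $\{(\semb{P},\semb{Q}) : P \sim Q\}$ does not suffice, because intermediate reducts of the encoding are not themselves of the form $\semb{R}$; they contain restricted assertions such as $\pass{x \prec y}$ introduced by the \textsc{Com} rule. I propose the candidate
\[
\mathcal{R} \;=\; \{(\Psi,\, (\nu \ve{x})(A \parop \semb{P}),\, (\nu \ve{x})(A \parop \semb{Q})) \,:\, P \sim Q,\ A \text{ a parallel composition of assertions}\},
\]
closed under arbitrary $\Psi$-extensions. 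Verifying that $\mathcal{R}$ is a strong bisimulation reduces each simulation requirement to the operational correspondence lemma plus the fact that $\sim$ is a $\pi\!P$ bisimulation; the static equivalence clause reduces to showing the frames agree on all $\prec$- and $\curlyvee$-judgements, which follows because bisimilar $\pi\!P$ processes expose the same arcs.

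Third, for soundness ($\semb{P} \bisim \semb{Q} \Rightarrow P \sim Q$), I would show that $\{(P,Q) : \semb{P} \bisim \semb{Q}\}$ is a $\pi\!P$ bisimulation. A $\pi\!P$ transition from $P$ lifts to a ${\Psi}\!P$ transition from $\semb{P}$ by operational correspondence; the matching transition guaranteed by $\bisim$ then projects back to a $\pi\!P$ transition from $Q$. Closure of ${\sim}$ under arbitrary arc-extensions is matched on the ${\Psi}\!P$ side by clause 2 of Definition~\ref{def:strongbisim}.

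The main obstacle will be the mismatch between ${\Psi}\!P$ reducts and genuine encodings. After a \textsc{Com} step between $\semb{\overline{a}(z).Q}$ and $\semb{a(y).P}$, the result contains $(\nu x)(\pass{x \prec z} \parop \pass{y \prec x} \parop \cdots)$, whereas the $\pi\!P$ reduct $(\nu yz)(z/y \parop P \parop Q)$ encodes to a single arc $\pass{y \prec z}$. These are not syntactically equal, but by transitivity of $\prec$ the frames entail the same judgements on the free names (with $x$ bound). This structural-congruence-style reasoning, and the careful interaction between the non-transitive joinability $\curlyvee$ (the channel connectivity of ${\Psi}\!P$) and the transitive underlying preorder $\prec$ (the mechanism used to compute $\curlyvee$), is where I expect the argument to require the most care.
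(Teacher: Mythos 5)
Your overall skeleton---strong operational correspondence proved by induction, soundness via showing that $\{(P,Q) : \semb{P} \bisim \semb{Q}\}$ is a $\pi\!P$ bisimulation, and completeness via a candidate relation on the ${\Psi}\!P$ side---is the same as the paper's. The gap is in your completeness candidate $\mathcal{R}$: as written it is not closed under transitions, and your proposed remedy does not fix that. Reducts of encoded processes are not syntactically of the form $(\nu \ve{x})(A \parop \semb{P'})$. For example, a communication between $\semb{a(y).P_1}$ and $\semb{\overline{a}(z).Q_1}$ yields
\[
(\nu x)\bigl((\nu z)(\pass{x \prec z} \parop \semb{Q_1}) \parop (\nu y)(\pass{y \prec x} \parop \semb{P_1})\bigr),
\]
where the restrictions and assertions introduced by the prefix encodings sit nested \emph{inside} the parallel composition; hoisting them to the top level is a scope-extension step that holds only up to $\bisim$, not up to syntactic identity. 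Likewise, the chained arcs $\pass{y \prec x} \parop \pass{x \prec z}$ under the binder $x$ versus the single arc $\pass{y \prec z}$ in the encoding of the $\pi\!P$ reduct $(\nu yz)(z/y \parop P_1 \parop Q_1)$ is a mismatch between whole processes, not just frames: your ``frames entail the same judgements'' argument addresses the static-equivalence clause of bisimulation, but it does not put the reduct pair into $\mathcal{R}$, which is what the simulation clause demands. The paper closes exactly this hole by proving its candidate is a bisimulation \emph{up to} $\bisim$; you need either that up-to technique or to close $\mathcal{R}$ under $\bisim$ in both components, and then separately discharge the bisimilarity facts (scope extension, and collapsing a bound chain of arcs to a composite arc) that the up-to steps rely on.

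A second, smaller defect: you pair arbitrary assertion contexts $A$ and environments $\Psi$ with \emph{plain} $P \sim Q$. A transition of $(\nu\ve{x})(A \parop \semb{P})$ in environment $\Psi$ may be enabled by arcs coming from $\Psi$ and $A$, so the $\pi\!P$ transition it corresponds to is one of $P$ in the presence of those arcs; plain $P \sim Q$ does not directly provide a match for it, nor does it survive clause 2 (extension of arbitrary assertion) without further argument. You must invoke the congruence property of Hirschkoff et al.'s $\sim$ (closure under parallel composition with arcs) to transport $P \sim Q$ into the arc environment at every coinductive step. The paper's candidate, $\{(\Psi,\semb{P},\semb{Q}) : P \parop \Psi \sim Q \parop \Psi\}$, avoids this by reflecting the psi-calculus assertion environment into the $\pi\!P$ processes themselves, which makes both the environmental-extension clause and the matching of environment-enabled transitions immediate. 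Neither repair is conceptually deep, but without both of them your coinduction does not go through as stated.
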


\begin{proof}
  A full proof is available in the technical report~\cite{pohjola:newpsireport}.  
  We prove strong operational correspondence by a tedious induction,
  then prove (respectively) that
  $\{(P,Q). \trisimsub{\one}{\semb{P}}{\semb{Q}}\}$ is a bisimulation,
  and that
  $\{(\Psi,\semb{P},\semb{Q}). P \parop \Psi \sim Q \parop \Psi \}$ is
  a bisimulation up to $\bisim$.
\end{proof}

Thus our encoding validates the behavioural theory of ${\pi}\!P$ by connecting it to our fully mechanised proofs, while also showing that a substantially different design of the LTS
yields the same bisimilarity.
We will briefly compare these designs.
While we do rewriting of subjects in the prefix rules,
Hirschkoff et al. instead use relabelling rules like this one
(mildly edited to match our notation):
\begin{mathpar}
\inferrule*[]
    {\trans{P}{a(x)}{P'} \\ \frameof{P} \vdash a \prec b}
    {\trans{P}{b(x)}{P'}}
\end{mathpar}
\noindent An advantage of this rule is that it allows input and output labels
to be as simple as pi-calculus labels.
A comparative disadvantage is that it is not syntax-directed, and that the LTS has
more rules in total.
Note that this rule would not be a viable alternative to provenances in psi-calculi:
since it can be applied more than once in a derivation,
its inclusion assumes that the channels form a preorder wrt.~connectivity.

${\pi}\!P$ also has labels $[\varphi]\tau$, meaning that a silent
transition is allowed in environments where $\varphi$ is true.
A rule for rewriting $\varphi$ to a weaker condition, similar to the above rule for
subject rewriting, is included.
Psi-calculi does not need this because the \textsc{Par} rules take the
assertion environment into account.
${\pi}\!P$ transitions of kind $\trans{P}{[\varphi]\tau}{P'}$
correspond to $\Psi\!P$ transitions of kind $\framedtrans{\{\varphi\}}{P}{\tau}{P'}$.

Interestingly, the analogous full abstraction result fails to hold for the
embedding of pi-F in psi-calculi by Bengtson et al.~\cite{bengtson.johansson.ea:psi-calculi-long},
because outputs that emit distinct but fused names are distinguished by psi-calculus bisimilarity.
This issue does not arise here because $\pi\!P$ objects are always bound; however, we believe the
encoding of Bengtson et al.~can be made fully abstract by encoding free output with bound output,
exploiting the pi-F law $a\,y.Q \sim a(x)(Q \parop x=y)$.%

%In ,
%Hirschkoff et al.~equip ${\pi}\!P$ with a behavioural theory.
%They define a reduction semantics and an associated (strong) barbed congruence,
%a structured operational semantics and an associated (strong) labelled bisimulation, and
%an axiomatisation.
%The main results are that labelled bisimulation is a sound and complete wrt.~barbed congruence,
%and that the axiomatisation is sound and complete wrt.~the axiomatisation.

\subsection{Mixed choice}\label{sec:choice}
This section will argue that because we allow non-transitive channel connectivity,
the $\caseonly$ operator of psi-calculi becomes superfluous.
The formal results here will focus on encoding the special case of mixed choice.
We will then briefly discuss how to generalise these results
to the full $\caseonly$ operator.

Choice, written $P + Q$, is a process that behaves as either $P$ or $Q$.
In psi-calculi we consider $P + Q$ to abbreviate
${\caseonly\;{\ci{\top}{P}}\casesep{\ci{\top}{Q}}}$
for some condition $\top$ that is always entailed.
\emph{Mixed choice} means that in $P + Q$,
$P$ and $Q$ must be prefix-guarded.
In particular, mixed choice allows choice between an input and an output.
There is a straightforward generalisation to $n$-ary sums that, in order
to simplify the presentation, we will not consider here.

%\emph{Separate choice} means that both summands must be guarded by the same
%kind of prefix: either they are both input-guarded or both output-guarded.

%In order to encode choice with one-to-one transition correspondence,
%the initial transition of both branches needs to be available immediately;
%this means the encoding must offer both branches in parallel,
%i.e.~have roughly this shape:
%\[\semb{\alpha.P + \beta.Q} = \semb{\alpha}.\semb{P} \parop \semb{\beta}.\semb{Q}\]
%\noindent 

Fix a psi-calculus
$\mathcal{P}=(\terms,\assertions,\conditions,\vdash,\otimes,\unit,\chcon)$
with mixed choice. This will be our source language.
For technical convenience we assume that $\mathcal{P}$ satisfies
the equation $\one\sigma = \one$ for all substitutions $\sigma$;
see the associated technical report~\cite{pohjola:newpsireport}
for a discussion on how this assumption can be lifted.
We will construct a target psi-calculus and an encoding such that the
target terms make no use of the $\caseonly$ operator.
The target language $\mathcal{E}(\mathcal{P})$ adds to $\terms$ the
ability to tag a term $M$ with a name $x$; we write $M_x$ for the
tagged term.
We write $\alpha_x$ for tagging the subject of the prefix $\alpha$ with $x$.
Tags are used to uniquely identify which choice statement a prefix is a summand of.
As the assertions of $\mathcal{E}(\mathcal{P})$ we use $\assertions \times \powerfin{\nameset}$,
where $\powerfin{\nameset}$ are the \emph{disabled tags}.

\begin{defi}[Target language]\label{def:targetlang} Let
  $\mathcal{P}=(\terms,\assertions,\conditions,\vdash,\otimes,\unit,\chcon)$
be a psi-calculus.
Then $\mathcal{E}(\mathcal{P}) = (\terms_\mathcal{E},\assertions_\mathcal{E},\conditions_\mathcal{E},\vdash_\mathcal{E},\otimes_\mathcal{E},\unit_\mathcal{E},\chcon_\mathcal{E})$
is a psi-calculus whose components are as follows:

\[
\begin{array}{rlll}
  \terms_\mathcal{E} & = & \multicolumn{2}{l}{\terms \uplus \{M_x: x \in \nameset, M \in \terms_\mathcal{E}\}}
  \\
  \assertions_\mathcal{E} & = & \multicolumn{2}{l}{\assertions \times \powerfin{\nameset}
  }\\
  \conditions_\mathcal{E} & = & \multicolumn{2}{l}{\conditions \uplus \{M \chcon N: M,N \in \terms\} \uplus \nameset
  }\\
  (\Psi,\mathbf{N}) \otimes_\mathcal{E} (\Psi',\mathbf{N}') & = & \multicolumn{2}{l}{(\Psi \otimes \Psi',\mathbf{N} \cup \mathbf{N}')
  }\\
  \unit_\mathcal{E} & = & (\one,\emptyset)
  &\\
  & & & \\
  (\Psi,\mathbf{N}) & \vdash_\mathcal{E} & \varphi & \mbox{if } \varphi \in \conditions \mbox{ and } \Psi \vdash \varphi
  \\
  (\Psi,\mathbf{N}) & \vdash_\mathcal{E} & x & \mbox{if } x \in \nameset \mbox{ and } x \in \mathbf{N}
  \\
  (\Psi,\mathbf{N}) & \vdash_\mathcal{E} & M_x \chcon N_y & \mbox{if }  \Psi \vdash M \chcon N \mbox{ and } x \neq y \mbox{ and } x,y\notin\mathbf{N}
  \\
  (\Psi,\mathbf{N}) & \vdash_\mathcal{E} & M_x \chcon N & \mbox{if }  \Psi \vdash M \chcon N \mbox{ and } x\notin\mathbf{N}
  \\
  (\Psi,\mathbf{N}) & \vdash_\mathcal{E} & M \chcon N_x & \mbox{if }  \Psi \vdash M \chcon N \mbox{ and } x\notin\mathbf{N}
\end{array}
\]
% \begin{mathpar}
%   \terms_\mathcal{E} = \terms \uplus \{M_x: x \in \nameset, M \in \terms_\mathcal{E}\}
%   \and
%   \assertions_\mathcal{E} = \assertions \times \powerfin{\nameset}
%   \and
%   \conditions_\mathcal{E} = \conditions \uplus \{M \chcon N: M,N \in \terms\} \uplus \nameset
%   \and
%   (\Psi,\mathbf{N}) \otimes_\mathcal{E} (\Psi',\mathbf{N}') = (\Psi \otimes \Psi',\mathbf{N} \cup \mathbf{N}')
%   \and
%   \unit_\mathcal{E} = (\one,\emptyset)
%   \and
%   (\Psi,\mathbf{N}) \vdash_\mathcal{E} \varphi \quad \mbox{if } \varphi \in \conditions \mbox{ and } \Psi \vdash \varphi
%   \and
%   (\Psi,\mathbf{N}) \vdash_\mathcal{E} x \quad \mbox{if } x \in \nameset \mbox{ and } x \in \mathbf{N}
%   \and
%   (\Psi,\mathbf{N}) \vdash_\mathcal{E} M_x \chcon N_y \quad \mbox{if }  \Psi \vdash M \chcon N \mbox{ and } x \neq y \mbox{ and } x,y\notin\mathbf{N}
%   \and
%   (\Psi,\mathbf{N}) \vdash_\mathcal{E} M_x \chcon N \quad \mbox{if }  \Psi \vdash M \chcon N \mbox{ and } x\notin\mathbf{N}
%   \and
%   (\Psi,\mathbf{N}) \vdash_\mathcal{E} M \chcon N_x \quad \mbox{if }  \Psi \vdash M \chcon N \mbox{ and } x\notin\mathbf{N}
% 
%   
% \end{mathpar}
\end{defi}

We assume that the target language is constrained by a sorting system as in~\cite{DBLP:conf/tgc/BorgstromGPVP13}
that ensures only terms $M \in \terms$ can ever occur as objects of communication,
and in particular, that for all substitutions $\subst{\ve{x}}{\ve{T}}$,
$\ve{T} \subseteq \terms$.
The purpose of this simplification is to avoid having to consider input transitions such as
\[\framedtrans{\Psi}{\inprefix{M}{\ve{x}}N.P}{\inlabel{K}{K_x}}{P'}\]
\noindent that may result in substitutions where tagged terms must be substituted into
source-language terms or vice versa. It is possible to lift this assumption at the cost
of significant technical inconvenience~\cite{pohjola:newpsireport}.

The encoding $\semb{\_}$ from $\mathcal{P}$ to $\mathcal{E}(\mathcal{P})$
is homomorphic on all operators except assertion and choice, where it is defined as follows:
\begin{mathpar}
  \semb{\pass{\Psi}} = \pass{(\Psi,\emptyset)}
  \and
  \semb{\alpha.P + \beta.Q} =
  (\nu x)(\alpha_x.(\semb{P} \parop \pass{(\unit,\{x\})}) \parop \beta_x.(\semb{Q} \parop \pass{(\unit,\{x\})}))
\end{mathpar}
\noindent where $x \freshin \alpha,\beta,P,Q$.
\noindent If we disregard the tag $x$, we see that the encoding simply offers up both summands
in parallel. This clearly allows all behaviours of $\alpha.P + \beta.Q$, but there are two
additional behaviours we must prevent: (1) communication between the summands, and (2)
lingering summands firing after the other branch has already been taken.
The tagging mechanism prevents both,
as a consequence of how we define channel equivalence on tagged terms in $\mathcal{E}(\mathcal{P})$: tagged channels are connected if the underlying channel is connected.
To prevent (1), Definition~\ref{def:targetlang} requires the tags of connected channels
to be different,
and to prevent (2) the definition requires that the tags
are not disabled. Note that this channel connectivity is not transitive, not reflexive, and
not monotonic wrt.~assertion composition---not even if the source language connectivity is.

\begin{exa}
  We illustrate the operational behaviour of the encoding for the process $R = \alpha.P + \beta.Q$.
  Its encoding is
  \[\semb{R} = (\nu x)(\alpha_x.(\semb{P} \parop \pass{(1,\{x\})})  \parop \beta_x.(\semb{Q} \parop \pass{(1,\{x\})}))\]
  \noindent where $x$ is a fresh name. Suppose $\alpha$ is an output prefix with subject $M$,
  and that channel connectivity is reflexive.
  Then we can derive the transition $\trans{R}{\alpha}{P}$.
  The corresponding derivation from $\semb{R}$ uses the connectivity judgement $M_x \chcon M$ in the \textsc{Out} rule to derive the following transition:
  \[\trans{\semb{R}}{\alpha}{(\nu x)(\semb{P} \parop S)}\]
  \noindent where
  \[S = \pass{(1,\{x\})} \parop b_x.(\semb{Q} \parop \pass{(1,\{x\})})\]
  \noindent Since $x$ is fresh in $\semb{P}$, by scope extension we have
  $(\nu x)(\semb{P} \parop S) \bisim \semb{P} \parop (\nu x)S$.
  Moreover, we have $(\nu x)S \bisim 0$. To see why, note first that $(\nu x)S$ has no
  outgoing transitions: its only prefix has the tag $x$, which is disabled by its
  top-level assertion $\pass{(1,\{x\})}$. Second, note that since this disabled tag is a local
  name, its disabling has no effect on the environment.
\end{exa}

\begin{thm}[Correctness of choice encoding]\label{thm:choicecorrect}\ 
  \begin{enumerate}
  \item If $\framedtrans{\Psi}{P}{\alpha}{P'}$
    then there is $P''$ such that $\framedtrans{(\Psi,\emptyset)}{\semb{P}}{\alpha}{P''}$
    and $\trisimsub{(\Psi,\emptyset)}{P''}{\semb{P'}}$.
  \item If $\framedtrans{(\Psi,\emptyset)}{\semb{P}}{\alpha}{P'}$
    then there is $P''$ such that $\framedtrans{\Psi}{P}{\alpha_\bot}{P''}$
    and $\trisimsub{(\Psi,\emptyset)}{P'}{\semb{P''}}$.
  \item $\trisimsub{\one}{P}{Q}$ iff $\trisimsub{(\one,\emptyset)}{\semb{P}}{\semb{Q}}$.
  \end{enumerate}
\end{thm}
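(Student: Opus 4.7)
The plan is to prove the three parts in sequence, with the operational correspondence results in parts~1--2 feeding into the full abstraction in part~3.

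For part~1, I would proceed by induction on the derivation of the source transition $\framedtrans{\Psi}{P}{\alpha}{P'}$. Most cases are routine because $\semb{\_}$ is homomorphic on all operators save assertion and mixed choice, and the assertion case is immediate from $\semb{\pass{\Psi'}} = \pass{(\Psi',\emptyset)}$. The interesting case is a binary mixed choice $P = \alpha.P_1 + \beta.P_2$ whose source derivation uses \textsc{Case} on one summand, say the first. The target $\semb{P}$ can mimic this by combining \textsc{Scope}, \textsc{Par} and the corresponding prefix rule; the crucial point is that the required connectivity judgement on tagged terms (e.g.~$(\Psi,\emptyset) \vdash_{\mathcal{E}} M_x \chcon K$) holds in the unextended environment because no tags are yet disabled. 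The resulting continuation differs from $\semb{P'}$ only by a residual ``dead summand'' wrapped inside a $\nu x$-binder together with a local assertion $\pass{(\unit,\{x\})}$ disabling its tag. I would prove this residual bisimilar to $\nil$ by exhibiting a coinductive relation: any potential prefix transition from the dead summand would demand a connectivity judgement forbidden by the disabled tag, so the dead summand is inert, and scope extension (Theorem~\ref{thm:strong-struct}) then yields the conclusion.

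For part~2, I would again induct on the derivation, this time of the target transition, mirroring the cases from part~1. The non-trivial case is once more the choice. Two potentially spurious behaviours must be ruled out: an internal communication between the two summands (impossible because it would require $(\Psi,\emptyset) \vdash_{\mathcal{E}} M_x \chcon N_x$, contradicting the $x \neq y$ side-condition in Definition~\ref{def:targetlang}), and firing from a residual dead summand (cannot happen at the very first step since no tag has yet been disabled). Once we know the transition fires exactly one summand, we recover a matching source transition via \textsc{Case} with label $\alpha_\bot$, and the remaining coherence with $\semb{P''}$ is handled as in part~1.

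For part~3, the $\Rightarrow$ direction is by coinduction, with candidate relation roughly $\{((\Psi,\mathbf{N}), R, S) : \exists P,Q.\; R \bisim_{(\Psi,\mathbf{N})} \semb{P} \wedge S \bisim_{(\Psi,\mathbf{N})} \semb{Q} \wedge P \bisim_\Psi Q\}$, closed under extension of assertion. Extension lifts well because $(\Psi,\emptyset) \otimes_{\mathcal{E}} (\Psi',\mathbf{N}') = (\Psi \otimes \Psi',\mathbf{N}')$, and the tags in $\semb{P},\semb{Q}$ are $\nu$-bound hence fresh for $\mathbf{N}'$; static equivalence follows from the frame of $\semb{P}$ being a composition of the source frame and local tag-disabling assertions; and the simulation clause is discharged by parts~1 and~2. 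The $\Leftarrow$ direction is symmetric with candidate $\{(\Psi,P,Q) : \semb{P} \bisim_{(\Psi,\emptyset)} \semb{Q}\}$; a small auxiliary lemma $\semb{P\sigma} = \semb{P}\sigma$ (modulo $\alpha$-conversion of fresh tags) is needed to handle simulations of input transitions where substitution is applied to the continuation.

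The main obstacle I expect is the ``dead summand'' lemma in part~1: showing that the local assertion $\pass{(\unit,\{x\})}$ robustly inhibits all transitions of the residual summand, and that this inhibition survives the extension-of-assertion clause of bisimulation in arbitrary contexts. A secondary obstacle is setting up the candidate relations in part~3 so they are visibly closed under the extension-of-assertion and input clauses, given that the target's assertion monoid carries the extra disabled-tag component absent at the source.
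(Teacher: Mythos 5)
Your proposal matches the paper's proof in all essentials: part~1 by induction on the derivation of the source transition, part~2 by analysing (inverting) how the transition from $\semb{P}$ can be derived (the paper phrases this as structural induction on $P$ followed by inversion, rather than induction on the target derivation, but the case analysis is the same, including your treatment of inter-summand communication and dead summands), and part~3 by exhibiting candidate bisimulation relations, the paper's being $\{((\Psi,\mathbf{N}),\semb{P},\semb{Q}) : \trisimsub{\Psi}{P}{Q}\}$ and $\{(\Psi,P,Q) : \trisimsub{(\Psi,\emptyset)}{\semb{P}}{\semb{Q}}\}$, i.e.~yours modulo your (sensible) explicit closure under $\bisim$. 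Your dead-summand lemma, the monotonicity of tag disabling under assertion extension, and the substitution lemma $\semb{P\sigma} = \semb{P}\sigma$ are exactly the ingredients the paper's argument relies on.
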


\begin{proof}
  A full proof is available in the technical report~\cite{pohjola:newpsireport}.
  Forward simulation is by induction on the derivation of $\framedtrans{\Psi}{P}{\alpha}{P'}$,
  backward simulation by structural induction on $P$ followed by
  inversion on the derivation of the transition from $\semb{P}$.
  Full abstraction is by showing that
    \[\{((\Psi,\mathbf{N}),\semb{P},\semb{Q}) : \trisimsub{(\Psi}{P}{Q}\}\]
  and
    \[\{(\Psi,P,Q) : \trisimsub{(\Psi,\emptyset)}{\semb{P}}{\semb{Q}}\}\]
  are bisimulation relations.
\end{proof}

\noindent Here $\alpha_\bot$ denotes the label $\alpha$ with all tags removed.
It is immediate from Theorem~\ref{thm:choicecorrect} and the definition of $\semb{\_}$
that our encoding also satisfies the other standard quality criteria~\cite{DBLP:journals/iandc/Gorla10}:
it is compositional, it is name invariant, and it preserves and reflects barbs and divergence.

In the original psi-calculi, our target language is invalid because of its non-transitive
connectivity.
However,
if we restrict attention to \emph{separate} choice (where either both summands are inputs
or both summands are outputs),
a slight modification of the scheme above yields a correct encoding
in the context of the original psi-calculi.
With separate choice we can drop the side-condition that tags of connected processes
are distinct from Definition~\ref{def:targetlang}---
the only use of this side-condition is to prevent communication between summands,
and separate choice already prevents this by construction.
With this modified definition of $\chcon_{\mathcal{E}}$,
we have that if $\chcon$ is symmetric and transitive, then
so is $\chcon_{\mathcal{E}}$. Or in other words, if the source language
is expressible in the original psi-calculi then so is the target language.

These results generalise in a straightforward way to mixed $\caseonly$
statements
\[{\caseonly\;{\ci{\varphi_1}{\alpha.P}}\casesep{\ci{\varphi_2}{\beta.Q}}}\]
by additionally tagging terms with a condition, i.e.~$M_{x,\varphi_1}$, that must be entailed
in order to derive connectivity judgements involving the term.
The generalisation to free choice, i.e.~$P+Q$ where $P,Q$ can be anything,
is more involved and sacrifices some compositionality.
The idea is to use sequences of tags, representing which branches
of which (possibly nested) case statements a prefix can be found in, and disallowing
communication between prefixes in distinct branches of the same $\caseonly$ operator.

\section{Conclusion and related work}

We have seen how psi-calculi can be conservatively extended to allow asymmetric and non-transitive
communication topologies, sacrificing none of the bisimulation meta-theory.
This confers enough expressiveness to capture a pi-calculus with preorders,
and makes mixed choice a derived operator.

The work of Hirschkoff et al.~\cite{DBLP:conf/lics/HirschkoffMS13}
is closely related in that it uses non-transitive connectivity;
see Section~\ref{sec:prepi} for an extensive discussion.

Broadcast psi-calculi~\cite{DBLP:journals/sosym/BorgstromHJRVPP15} extend
psi-calculi with broadcast communication in addition to point-to-point communication.
There, point-to-point channels must still be symmetric and transitive,
but for broadcast channels this condition is lifted, at the cost
of introducing other side-conditions on how names are used:
broadcast prefixes must be connected via intermediate \emph{broadcast channels}
which have no greater support than either of the prefixes it connects,
precluding language features such as name fusion.
We believe provenances could be used to define a version of broadcast psi-calculi
that does not need this side-condition.

Kouzapas et al.~\cite{DBLP:journals/corr/KouzapasGG14} define a similar
reduction context semantics for (broadcast) psi-calculi.
Their reduction contexts requires three kinds of numbered holes with
complicated side-conditions on how the holes may be filled;
we have attempted to simplify the presentation by having
only one kind of hole.
While (weak) barbed congruence for psi-calculi has been studied before
\cite{DBLP:conf/lics/JohanssonBPV10} (see Section~\ref{sec:validation}),
barbed congruence was defined in terms of the labelled semantics rather
than a reduction semantics, thus weakening its claim to independent
confirmation slightly.

There is a rich literature on choice encodings for the pi-calculus%
~\cite{DBLP:journals/iandc/Gorla10,DBLP:journals/iandc/NestmannP00,DBLP:conf/popl/Palamidessi97,DBLP:conf/fossacs/PetersN12,DBLP:conf/esop/PetersNG13},
with many separation and encodability results
under different quality criteria for different flavours of choice.
Encodings typically require complicated protocols and tradeoffs between quality criteria.
Thanks to the greater expressive power of psi-calculi,
our encoding is simpler and satisfies stronger quality criteria
than any choice encoding for the pi-calculus.
Closest to ours is the choice encoding of CCS into
the DiX calculus by Busi and Gorrieri~\cite{DBLP:conf/ecoopw/BusiG94}.
DiX introduces a primitive for annotating processes with \emph{conflict sets},
that are intended as a generalisation of choice.
Processes with overlapping conflict sets cannot interact, and when a process acts, every
process with an overlapping conflict set is killed.
These conflict sets perform the same role in the encoding as our tags do.
We believe the tagging scheme used in our choice encoding also captures DiX-style
conflict sets.

\section*{Acknowledgements}
These ideas have benefited from discussions with many people at Uppsala University, ITU Copenhagen,
the University of Oslo and Data61/CSIRO, including
Jesper Bengtson, Christian Johansen, Magnus Johansson and Joachim Parrow.
I would also like to thank Jean-Marie Madiot and the anonymous reviewers
for valuable comments on earlier versions of the paper.

\FloatBarrier

\bibliographystyle{alpha}
\bibliography{bibliography}

%\ifdraft
%\newpage
%\appendix
%\newpage
%\input{psi2}
%\newpage
%\input{can}
%\input{snoopy}
%\newpage
%\input{bpi}
%\fi
\end{document}